
\documentclass[sigconf, nonacm]{acmart}





 
\PassOptionsToPackage{hyphens}{url}\usepackage{hyperref}
\usepackage[figuresright]{rotating} 
\usepackage{acro}

\usepackage{colortbl}
\usepackage{amsmath,amsfonts}
\usepackage{array}
\usepackage{algorithmic}
\usepackage{textcomp}
\usepackage{xcolor}
\usepackage{lscape}
\usepackage{pdflscape}
\usepackage{graphicx}
\usepackage{booktabs}
\usepackage{dirtytalk}
\usepackage{longtable}
\usepackage{wasysym}
\usepackage{multirow}
\usepackage{booktabs}
\usepackage{hyperref}
\usepackage{tikz}
\usepackage{tabularx}
\usepackage{url}
\usepackage{microtype}
\usepackage{diagbox}

\newcommand{\T}{\mathbb{T}}
\usetikzlibrary{turtle}
\usepackage[inline]{enumitem}

\NewDocumentCommand{\rot}{O{45} O{1em} m}{\makebox[#2][l]{\rotatebox{#1}{#3}}}

\newcommand{\oldstuff}[1]{}
\newcommand{\old}[1]{}

\newtheorem{definition}{Definition}[section]
\newtheorem{theorem}{Theorem} 



\def\BibTeX{{\rm B\kern-.05em{\sc i\kern-.025em b}\kern-.08em
    T\kern-.1667em\lower.7ex\hbox{E}\kern-.125emX}}

\makeatletter
\newcommand{\thickhline}{%
    \noalign {\ifnum 0=`}\fi \hrule height 1pt
    \futurelet \reserved@a \@xhline
}
\newcolumntype{"}{@{\hskip\tabcolsep\vrule width 1pt\hskip\tabcolsep}}
\makeatother

\makeatletter
\newcommand{\headerhline}{%
    \noalign {\ifnum 0=`}\fi \hrule height 1.5pt
    \futurelet \reserved@a \@xhline
}
\makeatother

\usepackage{tikz}
\newcommand*\emptycirc[1][1ex]{\tikz\draw (0,0) circle (#1);} 
\newcommand*\halfcirc[1][1ex]{%
  \begin{tikzpicture}
  \draw[fill] (0,0)-- (90:#1) arc (90:270:#1) -- cycle ;
  \draw (0,0) circle (#1);
  \end{tikzpicture}}
\newcommand*\fullcirc[1][1ex]{\tikz\fill (0,0) circle (#1);} 
\DeclareAcronym{defi}{
  short = DeFi,
  long  = decentralized finance
}

\DeclareAcronym{derec}{
  short = DeRec,
  long  = Decentralized Recovery
}

\DeclareAcronym{uto}{
  short = UTxO,
  long  = Unspent Transaction Outputs
}

\DeclareAcronym{mitm}{
  short = MiTM,
  long  = Man-in-The-Middle
}

\DeclareAcronym{dns}{
  short = DNS,
  long  = Domain Name System
}

\DeclareAcronym{dos}{
  short = DoS,
  long  = Denial-of-Service
}

\DeclareAcronym{nfc}{
  short = NFC,
  long  = Near Field Communication
}

\DeclareAcronym{ttl}{
  short = TTL,
  long  = time-to-live
}

\DeclareAcronym{adb}{
  short = ADB,
  long  = Android Debugging Bridge
}

\DeclareAcronym{dao}{
  short = DAO,
  long  = Decentralized Autonomous Organization
}

\DeclareAcronym{spa}{
  short = SPA,
  long  = Single Trace Power Analysis
}

\DeclareAcronym{hca}{
  short = HCA,
  long  = High-Correlation Analysis
}

\DeclareAcronym{rpc}{
  short = RPC,
  long  = Remote Procedure Call
}

\DeclareAcronym{tls}{
  short = TLS,
  long  = Transport Layer Security
}

\DeclareAcronym{ipc}{
  short = IPC,
  long  = Inter-Process Communication
}

\DeclareAcronym{see}{
  short = SEE,
  long  = Secure Execution Environment
}

\DeclareAcronym{tee}{
  short = TEE,
  long  = Trusted Execution Environment
}

\DeclareAcronym{nee}{
  short = NEE,
  long  = Not Secure Execution Environment
}

\DeclareAcronym{cnn}{
  short = CNN,
  long  = Convolutional Neural Network
}

\DeclareAcronym{ecc}{
  short = ECC,
  long  = Elliptic Curve Cryptography
}

\DeclareAcronym{ec}{
  short = EC,
  long  = Elliptic Curve
}

\DeclareAcronym{dsa}{
  short = DSA,
  long  = Digital Signature Algorithm
}

\DeclareAcronym{ecdsa}{
  short = ECDSA,
  long  = Elliptic Curve Digital Signature Algorithm
}

\DeclareAcronym{mpc}{
  short = MPC,
  long  = Multi-Party Computation
}
\usepackage{csquotes}
\usepackage{float}

 
\usepackage{pdflscape} 
\usepackage{xcolor}
\usepackage{hyperref}
\usepackage{longtable}
\usepackage{booktabs}
\usepackage{tabularx}
\usepackage{tabularx}
\usepackage{threeparttable}
\begin{document}

\title{Hedge Funds on a Swamp: Analyzing Patterns, Vulnerabilities, and Defense Measures in Blockchain Bridges}

 \author{Poupak Azad}
\affiliation{
  \institution{University of Manitoba}
  \country{Canada}
}
\email{azad@myumanitoba.ca}

\author{Jiahua Xu}
\affiliation{
  \institution{University College London}
  \country{United Kingdom}
}
\email{jiahua.xu@ucl.ac.uk}

\author{Yebo Feng}
\affiliation{
  \institution{Nanyang Technological University}
  \country{Singapore}
}
\email{yebo.feng@ntu.edu.sg}

\author{Preston Strowbridge}
\affiliation{
  \institution{University of Central Florida}
  \country{USA}
}
\email{pr438045@ucf.edu}

\author{Cuneyt Gurcan Akcora}
\affiliation{
  \institution{University of Central Florida}
  \country{USA}
}
\email{cuneyt.akcora@ucf.edu}

\begin{abstract}

  Blockchain bridges have become essential infrastructure for enabling interoperability across different blockchain networks, with more than \$24B monthly bridge transaction volume. However, their growing adoption has been accompanied by a disproportionate rise in security breaches, making them the single largest source of financial loss in Web3. For cross-chain ecosystems to be robust and sustainable, it is essential to understand and address these vulnerabilities. In this study, we present a comprehensive systematization of blockchain bridge design and security. We define three bridge security priors, formalize the architectural structure of 13 prominent bridges, and identify 23 attack vectors grounded in real-world blockchain exploits. Using this foundation, we evaluate 43 representative attack scenarios and introduce a layered threat model that captures security failures across source chain, off-chain, and destination chain components. 
  
  Our analysis at the static code and transaction network levels reveals recurring design flaws, particularly in access control, validator trust assumptions, and verification logic, and identifies key patterns in adversarial behavior based on transaction-level traces. To support future development, we propose a decision framework for bridge architecture design, along with defense mechanisms such as layered validation and circuit breakers. This work provides a data-driven foundation for evaluating bridge security and lays the groundwork for standardizing resilient cross-chain infrastructure.

\end{abstract}

\maketitle

\section{Introduction}
\label{sec:intro}

Blockchain bridges have become essential infrastructure in the blockchain ecosystem, enabling interoperability between otherwise isolated networks.  In its most basic asset-based bridge type, a bridge locks or burns assets on a source chain and mints or releases corresponding assets on a destination chain. This coordination preserves total supply across networks while facilitating asset mobility and composability. Bridges play a foundational role in enabling cross-chain decentralized applications, unifying fragmented liquidity pools, and allowing users to leverage features across heterogeneous platforms. For instance, monthly bridge transaction volume has exceeded \$24B \cite{defillama_bridges}, underscoring their systemic significance.

However, this growing importance has been matched by an alarming trend: bridges represent the single largest source of financial loss in Web3 security breaches.  As of mid 2025, 13 out of 39 bridges on \href{https://l2beat.com/bridges/summary}{l2beat.com} are already labelled as insecure.\footnote{%
The 13 bridges marked as vulnerable (indicated with a red cross and / or brown shield badge on the website) due to
unverified contracts and past hacks are: \textit{Aptos (LayerZero)}, \textit{LayerZero v2 OFTs}, \textit{Multichain}, \textit{Connext}, \textit{Omnichain (LayerZero)}, \textit{Hyperlane Nexus},
\textit{Socket}, \textit{Chainport}, \textit{Allbridge}, \textit{StarGate (LayerZero)}, \textit{Symbiosis}, \textit{Everclear}, and \textit{Orbit Bridge}.
} Of the top six bridges in total-value-locked as ranked by \cite{kiepuszewski2022rollups} in 2022, three (\textit{Multichain} \cite{multichain}, \textit{Ronin} \cite{ronin_bridge}, and \textit{Rainbow} \cite{rainbow_bridge}) have already been hacked for more than \$750M. A critical vulnerability in the sixth, \textit{Polygon's Plasma Bridge} \cite{polygon_plasma_bridge}, could have exposed \$850M; it was patched just in time after a white-hat disclosure that earned a \$2M bounty.  

Vulnerabilities range from smart contract bugs and improper verification logic to compromised multisig keys and failure-prone trusted validators. The result is a persistent threat model with catastrophic consequences: stealthy attacks, rapid fund drainage, and minimal recourse for victims. The threat of large-scale, unpredictable failure continues to undermine user trust and impede the adoption of cross-chain systems. With billions of USD locked in bridges, their current security risks make them resemble hedge funds operating atop a swamp; highly valuable, yet dangerously unstable and exposed to unpredictable attacks.

Despite a growing body of research and security reviews, there is still no unified framework to evaluate bridge vulnerabilities. Existing literature tends to focus on isolated case studies~\cite{augusto2024xchainwatcher, wu2024safeguarding}, postmortem hack reports~\cite{belenkov2025sok}, or abstract protocol taxonomies~\cite{li2025blockchain, belenkov2025sok}. This fragmentation limits our ability to draw general conclusions, compare implementations, or develop preventive standards.

In this work, we take a data interoperability perspective to present a comprehensive security and privacy systematization of blockchain bridges, grounded in both formal modeling and empirical analysis. We argue that bridge security is fundamentally a data transfer security problem, concerned with state consistency, transaction ordering, and trust-minimized replication across chains. We believe that it must be addressed with the same rigor as traditional interoperability standards. 

We approach the study of bridge security through three orthogonal lenses: 

First, we formalize the operational layers of bridge systems, spanning the source chain, off-chain intermediaries, and destination chain components. This layered model allows us to define three core security priors, such as cross-chain causality, and to reason about how different bridge designs instantiate or violate them. 

Second, we perform a large-scale static analysis of bridge smart contracts deployed on Ethereum \cite{ethereum}. Using program analysis and custom metrics, we map the usage of role-based access control, defensive programming patterns, error handling, and call structures across 13 major bridge protocols.  Next, we extract bridge-related transaction networks from on-chain data and apply graph machine learning to analyze usage and attacker behavior in real-world incidents. Our analysis spans 43 major attacks and reveals patterns in exploit phases, laundering strategies, and response delays.

Most importantly, we aim to answer four research questions to evaluate proactive goals: 
\begin{enumerate*}[label={\bf RQ\arabic*}]
\item (Prevention): To what extent are bridge attacks preventable? Which defense techniques, such as static code analysis, formal verification, or trusted execution environments, are effective? Are some bridge types inherently more secure?
\item (Detection): If prevention is not guaranteed, can attacks be detected in real time? What mechanisms (e.g., anomaly detection, on-chain monitoring) support timely
identification? 
\item (Mitigation): Once an attack is detected, what mechanisms exist to mitigate damage and minimize asset loss? How do mitigation strategies differ across bridge types?
\item (Outlook): What is the long-term outlook for bridge security? What foundational steps are needed to design formally verifiable or provably secure bridge protocols?
\end{enumerate*}

By unifying theoretical foundations with large-scale empirical analysis, this work offers the first comprehensive, data-driven systematization of blockchain bridge security. Our findings serve as a baseline for future research, standardization efforts, and the secure evolution of bridge protocols.

\textbf{Our contributions are as follows:}
\begin{itemize}
    \item We formalize a layered model of blockchain bridge architectures and define core security priors relevant to cross-chain behavior.
    \item We develop a vulnerability taxonomy and introduce a formal notion of bridge attack surfaces based on trust assumptions and implementation details.
    \item We perform the first large-scale static analysis of deployed bridge contracts, quantifying security patterns across access controls, call structures, and guard mechanisms.
    \item We extract and analyze transaction-level behavior from bridge exploit incidents, identifying behavioral patterns across phases of use, compromise, and fund laundering.
    \item We provide security benchmarks and design recommendations to guide future bridge development, formal analysis, and regulatory evaluation.
\end{itemize}

\section{Related Work}
Recent literature reflects an active effort to reconcile decentralization, trust minimization, and performance in blockchain bridges. We summarize case studies and surveys here and refer the reader to Appendix~\ref{sec:extendedRelated} in the supplementary material (available at our \href{https://github.com/FDataLab/BlockchainBridgeAnalysis/blob/main/README.md}{repository URL}) for additional details on interoperability and benchmark studies.

Transaction analysis for bridges can identify operational patterns, including usage behaviors, anomalous activity, and indicators of compromise. Huang et al.~\cite{huang2024seamlessly} present an in-depth empirical analysis of Stargate, a prominent Layer-0 bridge with one of the highest total value locked (TVL) figures in the ecosystem. Using on-chain data from six EVM-compatible blockchains (Ethereum, Polygon \cite{polygon}, BSC~\cite{bsc2020whitepaper}, Avalanche \cite{avalanche}, Arbitrum \cite{arbitrum}, Optimism \cite{optimism}), they examine Stargate’s transaction volume, user adoption, and operational patterns. This case study provides a data-driven benchmark for analyzing cross-chain liquidity transfers in practice. However, the study does not identify bridge priors nor carry out a static code analysis.

Subramanian et al.~\cite{subramanian2024benchmarking} benchmark the performance of blockchain bridge aggregators: services that route transfers across multiple bridge protocols for optimal speed or cost. They develop a framework to test popular aggregators (e.g., LI.FI, Socket, deBridge) by executing hundreds of cross-chain swaps and recording metrics such as fees, slippage, and latency. Their findings quantify differences in cost-efficiency and reliability, providing performance benchmarks for user-centric interoperability services. Our work differs in its broader focus and the development of a taxonomy to offer a holistic view of bridges.

Augusto et al.~\cite{augusto2025xchaindatagen} introduce \textit{XChainDataGen}, a framework for generating large-scale bridge transaction datasets. Using this tool, they collect approximately 35~GB of data from five major bridges deployed on 11 blockchains during the second half of 2024, extracting over 11.2 million bridge transactions that moved over \$28B in token value. They compare protocols in terms of security, cost, and performance, contrasting, for example, full source-chain finality with \textquote{soft} finality, alongside fee models and emerging paradigms such as cross-chain intents. However, the study focuses on financial aspects and does not study bridge vulnerabilities that hinder bridge adoption in finance.

Recent surveys and taxonomies have focused on identifying systemic vulnerabilities in bridge designs. Li et al.~\cite{li2025blockchain} provide a comprehensive review of blockchain bridges, classifying architectures (e.g., lock-mint vs. notary schemes) and documenting common vulnerabilities such as smart contract flaws, centralization risks, liquidity issues, and oracle manipulation. While the study presents a broad taxonomy and useful design insights, it does not analyze transaction data or investigate real-world exploits. In a recent work, Augusto et al.~\cite{augusto2024sok} analyze bridge exploits; however, they do not conduct a direct analysis of what happens to stolen funds after bridge hacks, nor carry out static code analysis themselves. Instead, they rely heavily on systematic literature review, audit reports, bug bounty disclosures, and gray literature to collect information about vulnerabilities and attacks.  Our analysis covers both the code and transaction analysis.

In contrast, Belenkov et al.~\cite{belenkov2025sok} present a Systematization of Knowledge focused on major bridge hacks, including the \$600M Axie \textit{Ronin} exploit. They categorize failure modes such as compromised keys, multisig errors, and validator logic flaws, and propose best practices for prevention. However, the analysis is limited to static perspectives and excludes transactional behavior. Our work complements and extends these efforts by combining both static and transaction analysis perspectives and grounds them in a unified formal model.

Most security and forensic analyses of bridge hacks have been retrospective machine learning studies of transaction networks, typically conducted on a per-case basis and without developing generalizable vulnerability taxonomies or formalizing bridge-specific assumptions. For example, Augusto et al.~\cite{augusto2024xchainwatcher} propose \textit{XChainWatcher}, a Datalog-based monitoring system that detects anomalous token flows across bridges in real time. While it successfully reconstructs the \textit{Ronin} and \textit{Nomad} \cite{nomad} attacks, its scope is limited to two case studies and does not address broader questions about interoperability security. Similarly, Lin et al.~\cite{lin2025track} develop \textit{ABCTracer}, an automated framework for linking bridge transaction legs. Combining event log mining with machine learning inference, \textit{ABCTracer} achieves 91.75\% F1-score in identifying transaction pairs across 12 DeFi bridges. While valuable for forensics, the work does not define a vulnerability taxonomy nor investigate design assumptions underpinning bridge protocols.

Finally, Wu et al.~\cite{wu2024safeguarding} focus on identifying and classifying bridge-specific attacks. Using a dataset of 49 bridge exploits totaling nearly \$4.3B in losses, they propose \textit{BridgeGuard}, a graph-based detection system that models bridge transactions and flags anomalous behavior. Evaluated on 203 known attacks and 40,000 benign transactions, \textit{BridgeGuard} outperforms prior methods in recall and detection precision. However, their work does not incorporate static analysis of contract logic or propose a general framework for understanding bridge security.

While these studies contribute valuable empirical and forensic insights, they tend to focus narrowly on individual case studies, static audits, or retrospective detection. In contrast, our work presents the first unified and multi-dimensional study that combines a formal model of bridge semantics, a systematic static analysis of deployed bridge contracts, and a large-scale transaction analysis across multiple chains.  

\section{Bridge Modeling and Formalization} \label{sec:background}
A blockchain $b$ is an immutable ledger where transactions are appended chronologically. It comprises a set of peer-to-peer network nodes $N_p$, a set of address nodes $N_a$, a chronological history of transactions $\mathbb{H}$ between nodes in $N_a$, and consensus rules $R$ that govern transaction creation, where $N_p \cap N_a \neq \emptyset$ and $b = (N_p, N_a, \mathbb{H}, R)$. 

\paragraph{Asset Types} A clear distinction exists between coins, tokens, and wrapped assets. Coins are native digital currencies intrinsic to their own blockchains, such as Bitcoin  (BTC) on the Bitcoin network \cite{bitcoin}, and are essential for paying transaction fees and participating in consensus mechanisms. A smart contract-based token $\theta$ represents a type of blockchain currency, with its state defined by the chronological history of transactions $\mathbb{H}_{\theta}$ in which $\theta$ has been involved. All blockchains issue native coins (e.g., Ether on Ethereum), but only some blockchains allow users to issue smart contract-based tokens (e.g., Storj on Ethereum). 

Wrapped tokens are a subset of tokens designed to represent tokens from one blockchain on another, facilitating interoperability across disparate blockchain networks. For instance, Wrapped Bitcoin (WBTC) is issued on Ethereum, enabling BTC to be utilized within Ethereum's decentralized finance ecosystem. Despite enhancing bridge functionality, wrapped tokens inherit the limitations of standard tokens: they cannot be used to pay transaction fees on the host blockchain (i.e., Ethereum for WBTC).  

A transaction $tx \in \mathbb{H}$ on $b$ transfers a value $v$ of a token $\theta$ from an address $a_1 \in N_a$ to an address $a_2 \in N_a$, and is associated with a timestamp $t_{tx}$ indicating when the transaction occurred: 
$tx = (\theta, v, a_1, a_2, t_{tx})$. If the transaction is understood, we will simplify $t_{tx}$ to $t$. Each transaction is assumed to be valid under the consensus rules $R$ of $b$.

\paragraph{Blockchain Layers} Blockchain systems are structured into hierarchical layers to manage scalability, functionality, and specialization.

\begin{itemize}[leftmargin=*]
    \item \textbf{Layer 1 (L1):} Base layer of a blockchain network, encompassing the core protocol responsible for transaction validation, consensus mechanisms, and data storage. L1 blockchains operate independently and are the foundation upon which other layers are built. Examples include Bitcoin, Ethereum, and Solana \cite{solana}.
    
    \item \textbf{Layer 2 (L2):} Built atop L1 blockchains, L2 protocols aim to enhance scalability and transaction throughput without altering the base protocol. They achieve this by processing transactions off-chain or in parallel, subsequently settling them on the L1. Notable L2 implementations include Optimism and Arbitrum on Ethereum.
    
    \item \textbf{Layer 3 (L3):} L3 represents an emerging concept focusing on application-specific functionalities. These are protocols or networks constructed on L2 solutions, offering tailored environments for decentralized applications (dApps). L3s aim to provide enhanced scalability, interoperability, and customization, facilitating complex applications like decentralized finance platforms and gaming ecosystems. Examples of L3 projects include Orbs Network \cite{orbs_network} and XAI Games \cite{xai_games} on Arbitrum.
\end{itemize}

\subsection{Blockchain Bridges}
A blockchain bridge facilitates the transfer of tokens across distinct blockchain domains. While many bridges connect L1 blockchains, others operate across layers, such as the \textit{Arbitrum Canonical Bridge} between Ethereum (L1) and Arbitrum (L2). Bridges that involve L3 layers are an emerging area with early examples such as \textit{Arbitrum Orbit}, \textit{zkSync Hyperchains} \cite{zksync}, and \textit{Orbs Network}.  Despite their growing importance, blockchain bridges lack a consistent formalization in the literature, which we aim to define as follows.

We consider two blockchain domains (i.e., chains or protocols), $b_1\in \{L1, L2, L3\}$ and $b_2\in \{L1, L2, L3\}$, which may differ in node sets, transaction histories, and consensus rules. We define a bridge between the domains $b_1$ and $b_2$ through an implementation mechanism $\mathcal{I}$ that governs the operational logic of the bridge:

\begin{equation}
\mathbb{B}_{1 \leftrightarrow 2} = (\{b_1, b_2\}, \mathcal{I})
\end{equation}

We categorize bridges along two orthogonal axes: their \textit{operational trust model} and their \textit{functional type}. The trust model refers to how source domain activity is verified and includes \textit{trusted}, \textit{trust-minimized}, and \textit{trustless} designs, formalized shortly in Section~\ref{sec:formaliseattacksurface}. The functional type captures how value is transferred and includes:

\begin{itemize}[leftmargin=*]
    \item \textbf{Asset Bridges:} In \textit{asset bridges} (also known as burn-and-mint models), the bridge locks or burns an asset $\theta_1$ on the source domain $b_1$ and creates a representative asset $\theta_2$ on the destination domain $b_2$. These assets are not inherently equivalent; their linkage is established solely through the semantics of the bridge protocol. They include:
    \begin{itemize}[leftmargin=*]
        \item \textit{Externally-Verified Asset Bridges}, which depend on multisigs, notaries, or sidechains to verify lock events (e.g., \textit{Wormhole} \cite{wormhole}, \textit{Ronin}, \textit{Multichain}).
        \item \textit{Rollup-Native Asset Bridges}, which leverage L1 consensus to verify L2 state transitions using fraud or validity proofs (e.g., Arbitrum, Optimism, Loopring \cite{loopring}). These are considered the most trustless.
    \end{itemize}
    The settlement process on asset-based bridges may be slow and require specialized verification and challenge code. For example, optimistic rollups like Arbitrum or Optimism impose a 7-day challenge period on withdrawals.
    \item \textbf{Liquidity Networks:} In contrast to asset-based bridges, \textit{liquidity network-based bridges}, such as Connext\cite{Connext2025} or Across\cite{Across2025}, avoid minting and instead fulfill user requests through liquidity providers (LPs) who maintain reserves on both domains. LPs are economically incentivized via transfer fees and, in some protocols, additional yield or reward mechanisms. This design enables faster settlement and circumvents the latency of on-chain verification and challenge periods. 
    
    \item \textbf{Hybrid Bridges:} These support both functional models, often depending on the specific chain pair or asset type. For instance, \textit{Multichain} combines canonical minting with liquidity provisioning.
\end{itemize}

A bridge transaction $tx$ moves a value $v$ of token $\theta_1$ from an address $a_1 \in N_a$ on $b_1$ to an address $a_2 \in N_a$ on $b_2$:
\begin{equation}
tx = (\theta_1, v, a_1, a_2, t_{tx})
\end{equation}

The transformation of a bridge transaction state $\chi$ under the bridge’s operation is defined as:%
\begin{equation}
\chi_0 \xrightarrow{\mathbb{B}_{1 \leftrightarrow 2}} \chi_t
\end{equation}%
where $\chi_0$ and $\chi_t$ represent the initial and final states of the transaction across chains.

The implementation $\mathcal{I}$ of the bridge includes mechanisms on the source chain, off-chain components, and the destination chain. Additionally, the bridge relies on a node trust set $\mathbb{T}$, comprising entities responsible for validating and securing bridge transactions. The composition of $\mathbb{T}$ is determined by the specific implementation $\mathcal{I}$ and the security assumptions of $\{b_1, b_2\}$.

\subsection{Bridge Mechanism}
Consider that Alice wants to move some assets from the source chain $ b_1$ to the destination chain $ b_2$. For simplicity in exposition, we will assume that at a given time, the bridge is used by one user only and for bridging one token only. The token Alice will be moving is $\theta_1$, and the representation of the token on $ b_2$ is $\theta_2$. We track the movement of value across the bridge $\mathbb{B}_{1 \leftrightarrow 2}$ as a function of time and fee structure. The state of the bridge transaction $\chi$ is described by four accounts: Alice's addresses $a_{1}$ on $b_1$ and $a_2$ on $b_{2}$ along with the bridge addresses $c_{1}$ on $b_1$ and $c_{2}$ on $b_2$. The state of the bridge transaction is represented as 
$\chi = (a_{1}, c_{1}, a_{2}, c_{2})$.

We track the mechanisms in three stages: source chain mechanism, off-chain mechanism, and destination chain mechanism.

\subsubsection{Source Chain Mechanism} 
Alice initiates the transfer on $ b_1$. Initially, at $t = 0$, Alice holds $v_1$ units of the token $\theta_1$ at her address $a_{1}$, with a total value $v_1 \cdot price(\theta_1,t)$, where $v_1$ is the amount of the token, and $price(\theta_1,t) $ represents the price of the token $\theta_1$ in a fiat currency such as USD at time $t$. We will use $a_x \mapsto v$ to show that the balance of address $a_x$ is $v$. Hence, the initial state of balances is $\chi_0 = (a_1 \mapsto  v_1, c_1 \mapsto  0, a_2 \mapsto  0, c_{2} \mapsto 0)$.

If Alice wants to transfer an amount $v_x \leq v_1$ to her address $ a_{2}$ on the other blockchain, she initiates the transaction $tx_{b_1} = (\theta_1, v_x, a_{1}, c_{1}, t_{tx_{b_1}})$ on $ b_1$.

She also pays a bridge fee $F_\text{forward}$, which consists of the transaction processing fee $f_1$ on $ b_1$ and $f_2$ on $b_2$, as well as the bridge operation fee ($f^*$). Hence the total fee $F_\text{forward} = f_1 + f_2 + f^*$. For simplicity, we assume that the bridge operation fee is charged on the initiating blockchain $ b_1$. The sent amount $v_x$ and fees are deducted from the initial balance of $a_1 \mapsto v_1$, hence $a_1 \mapsto  (v_1 - v_x - f_1 - f^*)$. This leaves only $f_2$ to be paid on $b_2$.

After $ b_1$'s block confirmation duration $d_{b_1}$, the 
bridge smart contract on $ b_1$ acknowledges $tx_{b_1}$ and locks ($c_1\mapsto v_x$) or burns ($c_1\mapsto 0$) the received assets depending on the implementation $\mathcal{I}$. We follow the locked asset scenario, and the state is updated as $\chi_{t_{tx_{b_1}}} = (a_1 \mapsto  (v_1 - v_x - f_1 - f^*), c_1 \mapsto  v_x, a_2 \mapsto  0, c_2 \mapsto  0)$. The blockchain state changes as $ b_1 = (N_p, N_a, \mathbb{H} \cup \{tx_{b_1}\}, R)$.

\subsubsection{Off-chain Mechanism} 
An off-chain communicator observes transactions to bridge address $c_{1}$, specifically transactions of the form: 
$tx_{B_1} = (\theta, v, a, c, t)$. 
Upon detection, it instructs $c_{2}$ to mint $v_x-f_2$ worth of $\theta_2$. The off-chain mechanism takes time $d_\text{off}$ to notice the first transaction and send a signal to the $ b_2$. Thus, at 
$t_\text{off} = t_{tx_{b_1}} + d_\text{off}$, the off-chain mechanism signals the start of the token transfer process on the second blockchain $ b_2$. Specifically, the mechanism signals the smart contract on $ b_2$ accordingly. There are multiple mechanisms to create this signal, and we refer the reader to \cite{belenkov2025sok} for a comprehensive list. In a simple oracle-based solution, a network of oracles monitors transactions on blockchain $ b_1$. These relayers are either permissioned (managed by a specific entity) or decentralized (e.g., a set of staked validators). When a bridge transaction $tx_{b_1}$ is observed, relayers submit proofs or messages (through transactions that push data to the blockchain) to a contract on $ b_2$ to trigger the minting or unlocking of assets. The node trust set $\mathbb{T}$ validates these processes based on the implementation $\mathcal{I}$.

\subsubsection{Destination Chain Mechanism} 
The transfer completes when $v_x-f_2$ worth of $\theta_2$ is minted on $ b_2$ and sent to Alice’s address $a_{2}$: 
$tx_{b_2} = (\theta_2, v_2=0, a_{2}\mapsto v_x-f_2, c_{2}\mapsto 0, t_\text{off})$. After $b_2$’s block confirmation duration $D_{b_2}$, the complete bridge process ends at $t= t_{tx_{b_1}} + d_\text{off} + d_{b_2}$. 
The final state is: 
$\chi_t = (a_1 \mapsto  (v - v_x-f_1-f^*), c_1 \mapsto  0, a_2 \mapsto  v_x -f_2, c_2 \mapsto  0)$. 

The destination domain is $ b_2 = (N_p, N_a, \mathbb{H} \cup \{t_{b_2}\}, R)$. 
The smart contract on $ b_2$ handles the transaction and updates the blockchain state accordingly.

\subsubsection{Reverse Process} The reverse transfer follows the same mechanism as the forward transfer but in the opposite direction. Alice initiates a transaction on $ b_2$ to send $v_x$ of $\theta_2$ back to $ b_1$. The bridge smart contract on $ b_2$ locks ($c_2\mapsto v_x$) or burns ($c_2\mapsto 0$) token $\theta_2$, and after confirmation, an off-chain mechanism signals $ b_1$ to release $(v_x - F_{\text{reverse}})$ of $\theta_1$ to Alice’s address $a_1$. The final state transition mirrors the forward process, with updated fees and timestamps.


\subsection{Formalization of Bridge Implementations}
\label{sec:formaliseimplement}

Blockchain bridges can be categorized based on their trust assumptions, yielding three classes: trustless, trusted, and trust-minimized. A trustless bridge has no external trusted entities, with $\T_{\textit{trustless}} = \{\}$, relying exclusively on the security guarantees of the underlying blockchains. A trusted bridge introduces external dependencies, requiring $\T_{\textit{trusted}} \neq \{\}$. A trust-minimized bridge is a subset of trusted bridges in which all trusted entities $\T_{\textit{trustmin}}$ consist of deterministic, publicly auditable algorithms (for example, on-chain smart contracts, oracles, or relays). While trust-minimized bridges rely on external components, their correctness can be independently verified by users.

The bridge trust set is expressed as
$\T = \T_{\textit{src}} \cup \T_{\textit{off}} \cup \T_{\textit{dest}},$
where $\T_{\textit{src}}$ captures trust assumptions on the source blockchain, $\T_{\textit{off}}$ captures trust in off-chain mechanisms (such as relayers or validators), and $\T_{\textit{dest}}$ captures trust in the destination blockchain. Two key metrics, $Size(\T)$ and $ cost(\T)$, characterize the scale of the trusted set and its overall fee impact, respectively. These metrics influence the security, decentralization, and efficiency of a bridge.

\subsubsection{Source Chain Implementation for Trust Set $\T_{\textit{src}}$}
The source chain component locks or burns assets before triggering $\T_{\textit{off}}$. It can be realized via smart contracts (so $\T_{\textit{src}}=\{SCs\}$), validator-controlled mechanisms ($\T_{\textit{src}}=\{\}$), or hybrid approaches that combine both. Smart contract mechanisms require trusting the correctness of the contract logic, whereas validator-based methods rely on the blockchain’s consensus, eliminating additional trust. Hybrid methods usually trade off fees or time for scalability and security. As shown in Table~\ref{tab:classification_table}, these components appear on both source and destination chains, but since their role is more critical on the destination side, where bridged tokens are released, we omit their detailed explanation here and defer it to the destination chain discussion to avoid repetition.

\subsubsection{Off-chain Implementation for Trust Set $\T_{\textit{off}}$}
Bridges typically use either a set of notaries ($N$) or a set of light clients ($L$) for off-chain processing.

\paragraph{Notaries}
Notaries introduce additional trust in external parties ($\T_{\textit{off}}\neq\{\}$) that monitor transactions and ensure bridge causality, with security depending on the number of notaries $Size(N)$ and their operational costs $cost(N)$. As $Size(N)$ and $ cost(N)$ grow, security improves, but so do fees and transaction delays. The expected loss costs $\mathbb{E}(C)$, bridge fees $F$, and transaction time delay $D$ follow
$\mathbb{E}(C)\propto 1/Size(N)$ and $F,D\propto Size(N),Cost(N)$.

\begin{figure}[ht]
\centering
\includegraphics[width=1\linewidth]{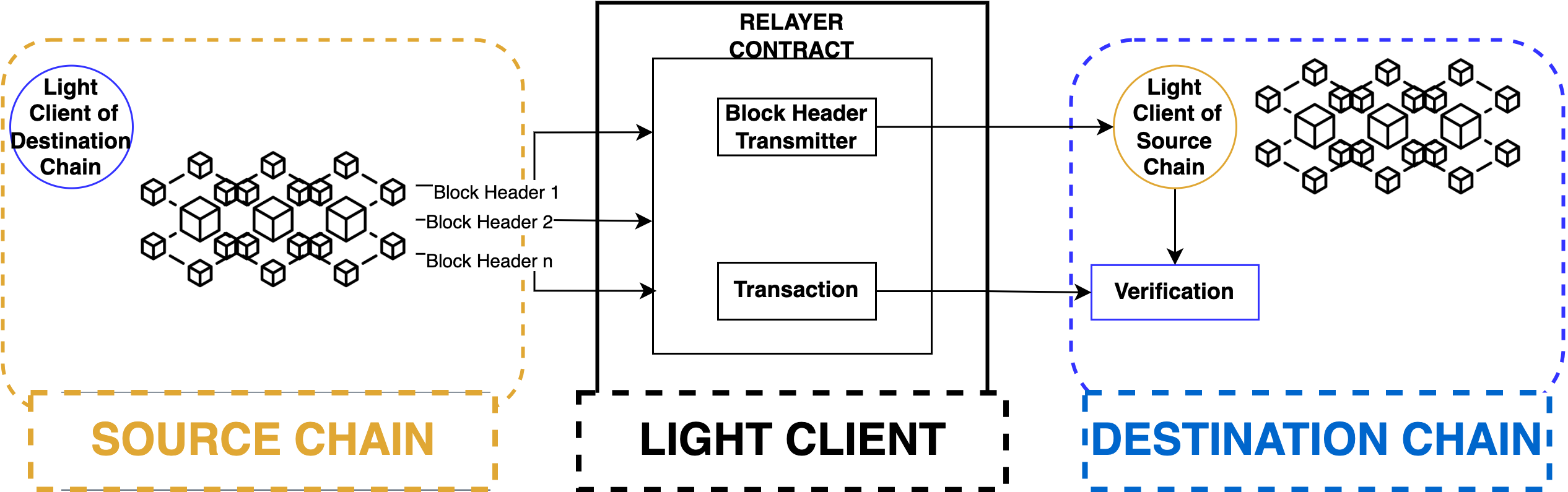}
\caption{Light Client Implementation}
\label{fig:lc_mech}
\end{figure}

\paragraph{Light Clients} Light clients rely on Merkle proofs ($M$) and perform verification on-chain via light client contracts ($L$). These systems do not depend on external entities, so the off-chain trust set is $\T_{\textit{off}} = \{L, M\}$. These bridges are trust-minimized because no external entities are needed beyond the blockchain itself. Verification time $t_{\textit{proof}}$ affects expected loss costs $\mathbb{E}(C)$ and bridge cost $F,D$ according to
$\mathbb{E}(C)\propto 1/t_{\textit{proof}},\quad F,D\propto t_{\textit{proof}}.$
Although frequent proof generation boosts security, it also increases fees. Light clients must run on each chain, reducing overall scalability. Figure~\ref{fig:lc_mech} illustrates a standard light client implementation used in many trustless bridge designs.

Some bridges blend the two approaches. A security-optimized hybrid requires that either the light client or the notary set remain honest, so
$\T_\text{off}=\{L,M\}\cap\{N\}.$
Another hybrid uses light clients where available and defaults to notaries otherwise:
$\T_\text{off}=\{L,M\}\vee\{N\}.$

\paragraph{Sidechains} A third option is using sidechains, which function as independent blockchains that can facilitate bridge transfers. Sidechains add a secondary blockchain with its own consensus $R^\prime$. A bridge passing through a sidechain depends on
$\T_\text{off}=(\{L,M\}\vee\{N\}\vee(\{L,M\}\cap\{N\}))\cup R^\prime.$

In other words, instead of relying exclusively on a notary network or on-chain light clients to handle the bridging, a sidechain can be set up with built-in asset-transfer features. Bridge transactions then pass through that sidechain, inheriting its consensus and security properties.

While the idea of a sidechain can still incorporate notaries or light clients under the hood, the main distinction is that a sidechain is an entire blockchain in its own right, rather than a narrowly defined tool like a light client or a notary service. This extra layer can increase scalability (since the sidechain can process many transactions internally) but also introduces new trust assumptions. Specifically, the correctness and security of the sidechain’s consensus mechanism. Figure~\ref{fig:sc_mech} illustrates a representative sidechain-based bridge design.

\begin{figure}[ht]
\centering
\includegraphics[width=1\linewidth]{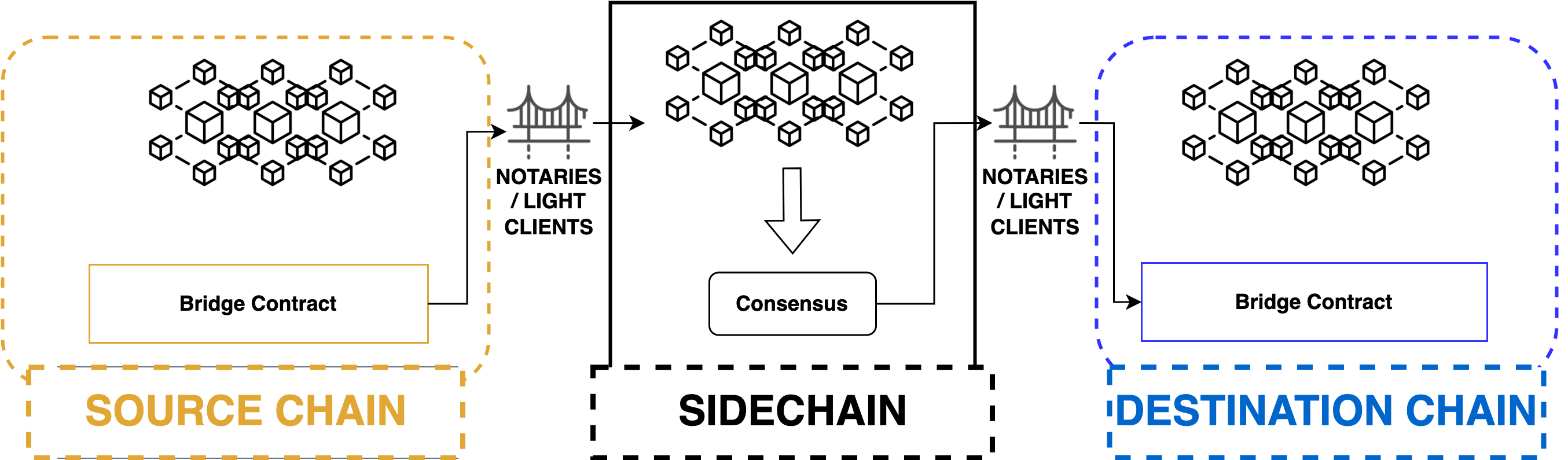}
\caption{Sidechain Implementation}
\label{fig:sc_mech}
\end{figure}

\subsubsection{Destination Chain Implementation $\T_{\textit{dest}} $}
\label{subsec:destimplement}

The destination side enforces token minting, final validation, and recipient assignment. Possible methods include smart contracts ($\T_{\textit{dest}} = \{SCs\} $), validator-control ($\T_{\textit{dest}} = \{\} $), custodian-based ($\T_{\textit{dest}} = \{C\} $), or a hybrid that integrates these techniques. The choice depends on constraints related to cost, security, and overall efficiency.

\paragraph{Smart Contracts}
A smart contract can be deployed on the destination chain to handle final token transfers. After receiving a valid signal and data from $\T_{\textit{off}}$, the contract locks, mints, or releases tokens to the designated address. Because users must trust the correctness of the contract code, we have $\T_{\textit{dest}} = \{SC\}$.

\paragraph{Validator Control}
If the destination blockchain’s existing consensus participants validate the bridging process, no additional trusted entities are introduced. The bridge may still have predefined logic to identify the destination address, but final checks are performed by decentralized validators. Thus $\T_{\textit{dest}} = \{\}$.

\paragraph{Hybrid}
A hybrid model combines smart contracts with validator-based consensus to balance security, cost, and time. The exact composition of $\T_{\textit{dest}} $ depends on how responsibilities are allocated between on-chain code and validator consensus.

\begin{table}
\caption{Cross-chain bridge classification for the highest volume bridges. \fullcirc \hspace{0.5mm}: trusted, \halfcirc \hspace{0.5mm}: trust minimized, \emptycirc \hspace{0.5mm}: trustless.}
\label{tab:classification_table}
\scriptsize
{%
\begin{tabular}{@{}llllc@{}}
\toprule
 &
  \multicolumn{1}{c}{\textbf{Bridge}} &
  \multicolumn{1}{c}{\textbf{\begin{tabular}[c]{@{}c@{}}Source \\
  Chain\end{tabular}}} &
  \multicolumn{1}{c}{\textbf{\begin{tabular}[c]{@{}c@{}}Destination \\
  Chain\end{tabular}}} &
  \multicolumn{1}{c}{\textbf{Trust}} \\
\midrule
{\multirow{6}{*}{\textbf{\begin{tabular}[c]{@{}l@{}}Notaries \ \end{tabular}}}} &
  {Nomad Bridge} &
  {Hybrid} &
  {Smart Contracts} &
  {\fullcirc} \\
{} &
  {Allbridge Classic} &
  {Smart Contracts} &
  {Smart Contracts} &
  {\fullcirc} \\
{} &
  {deBridge} &
  {Smart Contracts} &
  {Smart Contracts} &
  {\fullcirc} \\
{} &
  {Multichain} &
  {Smart Contracts} &
  {Smart Contracts} &
  {\fullcirc} \\
{} &
  {Wormhole Bridge} &
  {Smart Contracts} &
  {Smart Contracts} &
  {\fullcirc} \\
{} &
  {Avalanche Bridge} &
  {Smart Contracts} &
  {Validator Control} &
  {\fullcirc} \\
{} &
  {Ronin Bridge} &
  {Smart Contracts} &
  {Smart Contracts} &
  {\fullcirc} \\
{} &
  {Wanchain Bridge} &
  {Smart Contracts} &
  {Smart Contracts} &
  {\fullcirc} \\
\midrule
{\multirow{4}{*}{\textbf{\begin{tabular}[c]{@{}l@{}}Light \\
Client\end{tabular}}}} &
  {BTC Relay} &
  {Validator Control} &
  {Smart Contracts} &
  {\halfcirc} \\
{} &
  {zkBridge} &
  {Validator Control} &
  {Validator Control} &
  {\halfcirc} \\
{} &
  {Rainbow Bridge} &
  {Validator Control} &
  {Validator Control} &
  {\halfcirc} \\
{} &
  {PeaceRelay} &
  {Smart Contracts} &
  {Smart Contracts} &
  {\halfcirc} \\
\midrule
{\multirow{3}{*}{\textbf{\begin{tabular}[c]{@{}l@{}}Hybrid \ \end{tabular}}}} &
  {Connext} &
  {Smart Contracts} &
  {Smart Contracts} &
  {\fullcirc} \\
{} &
  {Optics Bridge} &
  {Smart Contracts} &
  {Smart Contracts} &
  {\fullcirc} \\
\midrule
{\multirow{15}{*}{\textbf{\begin{tabular}[c]{@{}l@{}}Sidechain \ \end{tabular}}}} &
  {Cosmos IBC} &
  {Validator Control} &
  {Validator Control} &
  {\emptycirc} \\
{} &
  {Gravity Bridge} &
  {Smart Contracts} &
  {Smart Contracts} &
  {\halfcirc} \\
{} &
  {zkRelay} &
  {Smart Contracts} &
  {Smart Contracts} &
  {\halfcirc} \\
{} &
  {Cactus} &
  {Smart Contracts} &
  {Smart Contracts} &
  {\halfcirc} \\
{} &
  {Celer cBridge} &
  {Smart Contracts} &
  {Smart Contracts} &
  {\halfcirc} \\
{} &
  {Orbit Bridge} &
  {Smart Contracts} &
  {Smart Contracts} &
  {\halfcirc} \\
{} &
  {Axelar} &
  {Smart Contracts} &
  {Smart Contracts} &
  {\halfcirc} \\
{} &
  {Chainswap} &
  {Smart Contracts} &
  {Smart Contracts} &
  {\halfcirc} \\
{} &
  {PolyBridge} &
  {Smart Contracts} &
  {Smart Contracts} &
  {\halfcirc} \\
{} &
  {pNetwork} &
  {Smart Contracts} &
  {Smart Contracts} &
  {\halfcirc} \\
{} &
  {Meter Passport} &
  {Smart Contracts} &
  {Smart Contracts} &
  {\halfcirc} \\
{} &
  {QANX Bridge} &
  {Smart Contracts} &
  {Smart Contracts} &
  {\halfcirc} \\
{} &
  {Binance Bridge} &
  {Smart Contracts} &
  {Smart Contracts} &
  {\halfcirc} \\
{} &
  {Horizon Bridge} &
  {Smart Contracts} &
  {Smart Contracts} &
  {\halfcirc} \\
{} &
  {Plasma Bridge} &
  {Smart Contracts} &
  {Smart Contracts} &
  {\halfcirc} \\
\bottomrule
\end{tabular}%
}
\end{table}

\subsection{Bridge Security Priors} 
\label{bridge_security}

We formalize three key security properties that bridges must satisfy to ensure token parity, transaction causality, and value preservation across blockchains. A fourth foundational requirement is the liveness of the bridge, but we assume that the underlying blockchains are live and capable of processing transactions in a timely manner, allowing us to focus on security properties specific to the bridge itself.

\paragraph{Bridge Peg} Blockchain domains $b_1$ and $b_2 $ often have disparate implementations of tokens. For users to reliably use token $\theta_1$ on $b_1$ as a representation of token $\theta_2$ on $b_2$, the bridge must establish that the asset amounts remain equivalent, i.e., $v_1 \equiv v_2$. This holds only if a peg exists between $\theta_1$ and $\theta_2$, ensuring that one $\theta_1$ can always be exchanged for one $\theta_2$. For simplicity, we ignore the total fee $F_{forward}$ that would be incurred to use the bridge, and state that given token prices at time $t$, denoted as $\textit{price}(\theta_1,t)$ and $\textit{price}(\theta_2,t)$, the bridge must ensure%
\begin{equation}\label{eq:token_equal2}
    v_1 \cdot \textit{price}(\theta_1, t) \equiv v_2 \cdot \textit{price}(\theta_2, t), \quad \forall t.
\end{equation}%
To uphold this peg, the bridge must maintain both liveness and security. If the bridge experiences technical failure or an attack, equivalency in \eqref{eq:token_equal2} may break down, leading to price divergence between the native and bridged tokens, which can disrupt user trust and capital efficiency.

\paragraph{Bridge Causality} A secure bridge must enforce causality, ensuring that no user extracts value that was never created or loses value that was not meant to be lost. Formally:

\begin{equation}\label{eq:causalityeq1}
    \begin{aligned}
         &\forall tx_{B_2}=(\theta_2, , , c_2, t_2), \quad \exists ! tx_{B_1}=(\theta_1, , , c_1, t_1) \\
         &\forall tx_{B_1}=(\theta_1, , , c_1, t_1), \quad \exists ! tx_{B_2}=(\theta_2, , , c_2, t_2) \\
         &\text{ such that } t_1 < t_2.
    \end{aligned}
\end{equation}

This ensures a bijective mapping between transactions on $b_1$ and $b_2$, preventing double-minting or loss of funds. Additionally, the bridge must preserve temporal ordering, enforcing $t_1 < t_2$ so that a transaction on $b_1$ must occur before its corresponding transaction on $b_2$.

\paragraph{Bridge Consistency} The bridge must guarantee that tokens locked on $b_1$ remain inaccessible until the corresponding tokens on $b_2$ are either burned or locked. This consistency constraint ensures that the system does not create or destroy value arbitrarily.

\begin{equation}\label{eq:crosschain1}
     v_2 \neq 0 \Rightarrow \neg \exists tx_{B_1}(\theta_2,v_1, , c_1, )
\end{equation}

Equation \eqref{eq:crosschain1} prevents a user from simultaneously withdrawing locked assets on $b_1$ and holding minted tokens on $b_2$, which could result in infinite money creation.

\begin{equation}\label{eq:crosschain2}
    v_2 > 0 \Rightarrow \exists ! \theta_1 \text{ such that } c_1 \mapsto  a_2.
\end{equation}

Equation \eqref{eq:crosschain2} strengthens this by enforcing a strict one-to-one correspondence between locked and minted tokens.

\paragraph{Value at Risk} Failures in bridge security can lead to asset loss with custodial, contract, and economic risks. In custodial risk, an externally controlled address or multi-sig wallet is compromised, and funds are permanently stolen. In contract risk, a smart contract-based bridge has a vulnerability (e.g., reentrancy attack) where attackers can extract tokens illicitly. In economic risk, a bridge relies on economic assumptions (e.g., optimistic rollups or bonded validators), but adversaries manipulate incentives to destabilize the system.

Losses may occur at i) the bridge contract level, where contract $c_1 $ on $b_1$ and contract $c_2$ on $b_2$ impact all users of the bridge, or ii) at the user level, where losses occur for individual user addresses $a_1$ or $a_2$. User-level losses (e.g., $a_2$ is not a valid address) are orthogonal to bridge security as long as they are not caused by flaws in bridge design.

A bridge’s trust model (custodian-based, validator-based, or smart contract-based) influences its risk profile and determines how funds are secured.

\subsection{Formalization of Bridge Attack Surfaces}
\label{sec:formaliseattacksurface}

Current research on bridge attacks is fragmented, lacking a consistent definition of attack surfaces and vectors. We present a formalized model for analyzing bridge security.

\begin{definition}[Attack]\label{attackdef}
    If \eqref{eq:token_equal2}, \eqref{eq:causalityeq1}, \eqref{eq:crosschain1}, or \eqref{eq:crosschain2} are violated due to malicious agents, the bridge is under attack.
\end{definition}

\begin{definition}[Failure]\label{failuredef}
    If \eqref{eq:token_equal2}, \eqref{eq:causalityeq1}, \eqref{eq:crosschain1}, or \eqref{eq:crosschain2} are violated due to technical errors but without malicious intent, the bridge is experiencing failure.
\end{definition}

\begin{theorem}
    If a bridge experiences an attack or failure, then \eqref{eq:token_equal2} is always violated.
\end{theorem}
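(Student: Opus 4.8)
The plan is to argue directly by case analysis from Definitions~\ref{attackdef} and~\ref{failuredef}. By those definitions, a bridge under attack or failure is exactly one in which at least one of \eqref{eq:token_equal2}, \eqref{eq:causalityeq1}, \eqref{eq:crosschain1}, or \eqref{eq:crosschain2} is violated. If the violated equation is \eqref{eq:token_equal2} itself, the conclusion is immediate, so the real content of the theorem is that a violation of the causality constraint \eqref{eq:causalityeq1}, or of either consistency constraint \eqref{eq:crosschain1} or \eqref{eq:crosschain2}, forces a violation of the peg \eqref{eq:token_equal2} as well. I would therefore dispatch these three remaining cases one at a time, in each showing that the total locked collateral $v_1$ on $b_1$ and the total minted representation $v_2$ on $b_2$ can no longer satisfy the value-weighted equality.

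First I would establish the bridging observation that makes each case go through: since a peg means one $\theta_1$ is exchangeable for one $\theta_2$, we have $\textit{price}(\theta_1,t)=\textit{price}(\theta_2,t)$, so \eqref{eq:token_equal2} reduces to the conservation law $v_1=v_2$, i.e.\ full collateralization. With this in hand the causality case is direct: a violation of \eqref{eq:causalityeq1} means either some mint $tx_{B_2}$ lacks its unique preimage lock $tx_{B_1}$ (a double-mint or mint-from-nothing), giving $v_2>v_1$, or some lock lacks its mint, giving $v_1>v_2$; either way $v_1\neq v_2$ and \eqref{eq:token_equal2} fails. The consistency cases are analogous: violating \eqref{eq:crosschain1} lets a user reclaim locked $\theta_1$ while the minted $\theta_2$ still circulates, inflating $v_2$ above its backing, and violating the strict one-to-one correspondence of \eqref{eq:crosschain2} breaks the equality of locked and minted amounts directly.

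The main obstacle is precisely this bridging step, because \eqref{eq:causalityeq1}--\eqref{eq:crosschain2} are combinatorial statements about the existence and uniqueness of transactions, whereas \eqref{eq:token_equal2} is a statement about market prices. To close the gap I would make explicit the economic assumption already hinted at in the discussion following \eqref{eq:token_equal2}, namely that at equilibrium the price ratio $\textit{price}(\theta_2,t)/\textit{price}(\theta_1,t)$ tracks the collateralization ratio $v_1/v_2$, so that a supply imbalance manifests as a strict price divergence rather than being silently absorbed. I expect the delicate point to be ruling out degenerate situations in which an over-issuance is momentarily masked (for instance by a simultaneous compensating failure) so that the marginal price still appears to satisfy \eqref{eq:token_equal2}; handling this cleanly relies on reading \eqref{eq:token_equal2} as an invariant required for all $t$, as its quantifier already demands, and arguing that the imbalance introduced by any violation persists over some interval of time.
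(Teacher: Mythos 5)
Your proposal is correct and takes essentially the same route as the paper's own proof in Appendix~\ref{app:proof}: a case analysis on which prior is violated, where a breach of causality~\eqref{eq:causalityeq1} yields unbacked minting or unclaimed locked value ($v_1 \neq v_2$), and a breach of consistency~\eqref{eq:crosschain1} or~\eqref{eq:crosschain2} yields value duplication or loss, so the value-weighted parity~\eqref{eq:token_equal2} fails. If anything, you are more careful than the paper, which silently assumes the price terms cannot compensate for a quantity imbalance, whereas you state this price--collateralization linkage as an explicit assumption and note the degenerate masking cases it must exclude.
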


Due to space limitations, the proof is given in Appendix~\ref{app:proof}.

\begin{definition}[Attack Vector $V$]\label{def:attack-vector}
    A vulnerability, pathway, or method that a malicious agent can exploit to launch an attack on a bridge $\mathbb{B}_{1 \leftrightarrow 2}$.
\end{definition}

\begin{definition}[Attack Surface $\Sigma $]\label{}
    The sum of all attack vectors for a given $\mathbb{B}_{1 \leftrightarrow 2}$. Formally, an attack surface is defined as $\Sigma = \langle \T, \mathcal{I} \rangle$, where $\T$ represents trusted entities and $\mathcal{I}$ represents implementation details.
\end{definition}

An attack surface can be decomposed into sub-surfaces for finer-grained analysis. A sub-surface $\Sigma^{'}_{\mathbb{B}_{1 \leftrightarrow 2}} \in \Sigma_{\mathbb{B}_{1 \leftrightarrow 2}} $ is defined as:

$$\Sigma^{'}_{\mathbb{B}_{1 \leftrightarrow 2}} = \langle \T^{'},\mathcal{I}^{'} \rangle, \quad \text{where} \quad \T^{'} \in \T, \quad \mathcal{I}^{'} \in \mathcal{I}.$$

Attack vectors in $\Sigma^{'}_{\mathbb{B}_{1 \leftrightarrow 2}}$ use all trusted entities $\T^{'}$ and span specific implementations $\mathcal{I}^{'}$.

\subsubsection{Examples of Attack Surfaces} In custodial key compromise ($\T = \{C\} $),   a centralized custodian’s private key is compromised, and all assets in the bridge are at risk. In the reentrancy attack on smart contract-based bridges ($\T = \{SCs\} $), an attacker exploits an improper withdrawal function to repeatedly drain funds. In validator collusion ($\T = \{V\} $), the validators of a PoS-based bridge collude; they approve fraudulent transactions and steal assets.

\subsubsection{An Attack Surface-Damage Model}
 
\begin{definition}[Damage Potential/Effort Ratio]
The Damage Potential/Effort Ratio of an attack vector $ V $, denoted $ der(V) $, quantifies the feasibility of an attack and is defined as $der(V) = \frac{I(V)}{E(V)}$, where $I(V)$ represents the expected damage or impact of the attack vector and $E(V)$ represents the computational, economic, or procedural effort required to execute the attack. 
\end{definition}

An attack vector $V$ is viable if and only if $der(V) > 1$, indicating that the expected damage outweighs the effort required to execute the attack. Hence, 
 
$$
der(V) = \begin{cases} 
    1, & \text{if } der(V) > 1 \\
    0, & \text{otherwise}
  \end{cases}
$$

We define the attack surface area of a bridge $ \mathbb{B}_{1 \leftrightarrow 2} $, or a subset of it, containing $ n $ attack vectors as:
\begin{equation}
    Area(\Sigma^{'}_{\mathbb{B}_{1 \leftrightarrow 2}}) = \sum_{i=1}^{n} der(V_i)
\end{equation}

We apply this framework to analyze security across distinct bridge implementation layers, including i) on-chain contracts (handling token locking, minting, and withdrawal), ii) off-chain relayers (notary or validator networks responsible for cross-chain verification) and iii) finality mechanisms (ensuring transaction irreversibility and bridging safety). This layered security model enables targeted risk mitigation by isolating attack surfaces based on implementation constraints.

\subsection{Attack Surfaces of Blockchain Bridges}

Blockchain bridges rely on multiple components (on-chain smart contracts, off-chain relayers, and destination chain mechanisms), each introducing distinct security risks. We formalize these risks through attack surface analysis and summarize them in Table~\ref{tab:attack_vectors_categorized}. Vectors are defined in Appendix~\ref{appsec:vectors}.

\subsubsection{Source Chain Layer Security} The attack surface of a bridge’s source chain implementation is given by $\mathcal{I}_{\textit{src}} = \langle \T_{\textit{src}}, R_{\textit{src}} \rangle$. For a bridge deploying $n$ smart contracts on the source chain, we model its attack surface as $\Sigma_{\textit{src}} = \langle \{SC_1, SC_2, \dots, SC_n\}, \T_{\textit{src}}\rangle$ (i.e., contract implementations and trusted entities), with an attack surface area $Area(\Sigma_{\textit{src}}) = \sum_{i=1}^{n} der(SC_i).$ For validator-controlled implementations (${SC_i} = \emptyset$), the attack surface consists solely of the trust set $\T_{\textit{src}}$ (i.e., the validators). While the contractual surface area is zero, the system remains exposed to validator-level risks such as economic incentives, collusion, or key compromise.
\subsubsection{Off-Chain Layer Security} The off-chain layer, which handles bridge communication, consists of notaries, light clients, or sidechains, each with unique trust assumptions as we discuss next.

\paragraph{Notary-Based Mechanisms}
For notary-based bridges ($\T_\textit{off} = \text{notaries}$), security improves with more notaries, but decentralization is costly. If a sufficient fraction of notaries turn malicious, they can control the bridge for a duration $t > t^*$, enabling asset theft.

\paragraph{Light Client Mechanisms}
For light-client bridges ($\T_\textit{off} = \{LC, MP\}$), security depends on Merkle proofs ($MP$). If an attacker controls the light client, they can alter transaction verifications. The attack surface reduces to
$Area(\Sigma_\textit{off} \mid \T_\textit{off} = \text{light client}) = der(\text{light client})$

\paragraph{Sidechain-Based Mechanisms} Sidechains use independent consensus rules (i.e., $R_{src}$). If $R_{src}$ is compromised, an attacker can mint arbitrary tokens or modify the bridge state. If the sidechain guarantees $R_{src}$ security, the attack surface vanishes $Area(\Sigma_\textit{off} | \T_\textit{off} = \text{sidechain}) = 0 \Leftrightarrow der(R_{src}) = 0.$

\subsubsection{Destination Chain Layer Security} The destination chain's attack surface depends on its reliance on smart contracts ($SC$) and custodians $\Sigma_\textit{dest} = \langle \tau_\textit{dest},\T_\textit{dest} \rangle, \quad \tau_\textit{dest} = \{SC_1, SC_2,\dots, SC_n\} \cup \{\textit{custodian}\}.$ Attack surface area is $ Area(\Sigma_\textit{dest}) = n + 1$, assuming a custodian is present.

\subsubsection{Total Attack Surface of a Blockchain Bridge} The total bridge attack surface is 
$\Sigma_{\mathbb{B}_{1 \leftrightarrow 2}} = \Sigma_\textit{src} \cup \Sigma_\textit{off} \cup \Sigma_\textit{dest} \cup \Sigma_\textit{other}.$ Other attack vectors include updates to bridge protocols, governance failures, or rug-pulls. We validate our framework by classifying past major bridge exploits under this model.

\begin{table}[h]
\centering
\caption{Categorized attack/disruption vectors across bridge layers. A dagger ($\dagger$) marks vulnerabilities that only arise when a bridge architecture actually employs the relevant component (for example, an oracle).}
\label{tab:attack_vectors_categorized}
\resizebox{1\columnwidth}{!}{%
\begin{tabular}{l c c c}
\toprule
\textbf{Attack Vector} & \textbf{Source Chain} & \textbf{Off-Chain} & \textbf{Destination Chain} \\
\midrule
\multicolumn{4}{l}{\textbf{Contract Logic \& Code Vulnerabilities}} \\
V1: Reentrancy attacks & \checkmark & & \checkmark \\
V2: Integer and arithmetic errors & \checkmark & & \checkmark \\
V3: Access control and forged account flaws & \checkmark & & \checkmark \\
V4: Race condition attacks & \checkmark & & \checkmark \\
V5: Unsafe external call exploits & \checkmark & & \checkmark \\
V6: Malicious event log manipulation & \checkmark & & \checkmark \\
V7: Contract upgrade risks & \checkmark & & \checkmark \\
\midrule
\multicolumn{4}{l}{\textbf{Authentication \& Authorization Failures}} \\
V8: Fake burn/lock proofs & \checkmark & \checkmark & \checkmark \\
V9: Malicious transaction modification & \checkmark & \checkmark & \checkmark \\
V10: Light-client verification flaws & \checkmark & \checkmark & \checkmark \\
V11: Oracle manipulation & \checkmark$\dagger$ & \checkmark & \checkmark$\dagger$ \\
V12: Malicious custodian manipulation & \checkmark$\dagger$ & \checkmark & \checkmark$\dagger$ \\
V13: Private key leakage or theft & \checkmark$\dagger$ & \checkmark$\dagger$ & \checkmark$\dagger$ \\
\midrule
\multicolumn{4}{l}{\textbf{Replay, Race, and Timing\textendash Based Attacks}} \\
V14: Timestamp manipulation & \checkmark & & \checkmark \\
V15: Replay attacks & & \checkmark & \checkmark \\
\midrule
\multicolumn{4}{l}{\textbf{Consensus \& Infrastructure Risks}} \\
V16: Consensus failure (51\% attack) & \checkmark & & \checkmark \\
V17: Delayed finality exploitation & \checkmark & & \checkmark \\
V18: Validator equivocation or misbehavior & \checkmark$\dagger$ & \checkmark &  \checkmark$\dagger$\\
V19: Denial of Service attacks & \checkmark & \checkmark & \checkmark \\
V20: Deep chain reorganization & \checkmark & & \checkmark \\
V21: Unbounded withdrawal limits &\checkmark & & \checkmark \\
V22: Rugpull & & \checkmark & \checkmark$\dagger$\\
\midrule
\multicolumn{4}{l}{\textbf{Front-End and Off-Chain Manipulation}} \\
V23: Front-end deception & & \checkmark & \\
\bottomrule
\end{tabular}}
\end{table}


\section{Bridge Design Patterns and Implementation Landscape}
\label{sec:bridge-summary}

To contextualize our formalism and threat model, we survey representative bridge implementations, analyzing their design across three key layers: source chain, off-chain coordination, and destination chain. Appendix Table~\ref{tab:bridge_implementations} summarizes their implementation patterns, trust assumptions, functional types, and blockchain coverage. Full protocol descriptions appear in Appendix~\ref{appendix:bridge-details} and the ecosystem is described in Appendix~\ref{appendix:state}.

\subsection{Static Analysis of Bridges}
\begin{table}[ht]
\caption{Access control and code structure metrics of bridge smart contracts}

\label{tab:accessControl}
\centering
\scriptsize
\resizebox{\linewidth}{!}{
\begin{tabular}{lrrrrr}
\toprule
\textbf{Bridge Name} & \textbf{Local vars} & \textbf{Inheritances} & \textbf{Modifier Count} & \textbf{RoleBased} & \textbf{Standard Libs} \\
\midrule
Avalanche BridgeToken         & 2  & 1 & 0 & Yes & 1 \\
Wormhole BridgeImplementation & 58 & 2 & 1 & No  & 3 \\
Arbitrum L1GatewayRouter      & 9  & 5 & 2 & No  & 0 \\
Arbitrum L1ERC20Gateway       & 3  & 1 & 1 & No  & 1 \\
Arbitrum L1CustomGateway      & 5  & 2 & 2 & No  & 1 \\
Arbitrum L1WethGateway        & 0  & 1 & 0 & No  & 2 \\
Stargate Router               & 30 & 3 & 1 & No  & 3 \\
DeBridge DeBridgeGate         & 44 & 5 & 4 & Yes & 1 \\
Across HubPool                & 50 & 5 & 3 & No  & 3 \\
Stargate Bridge               & 35 & 3 & 1 & No  & 1 \\
Allbridge Bridge              & 12 & 4 & 0 & No  & 1 \\
Nomad BridgeToken             & 6  & 4 & 0 & No  & 0 \\
Base L1StandardBridge         & 1  & 2 & 0 & No  & 0 \\
Optimism L1StandardBridge     & 0  & 2 & 0 & Yes & 0 \\
Hyperliquid Bridge2           & 66 & 2 & 0 & Yes & 5 \\
Meson BridgeV2                & 2  & 2 & 4 & Yes & 3 \\
\bottomrule
\end{tabular}}
\end{table}

\begin{table}[ht]
\caption{Lines of code (LOC), function visibility, and variable usage in bridge smart contracts}

\label{tab:codeAndFunctionMetrics}
\centering
\scriptsize
\resizebox{\linewidth}{!}{
\begin{tabular}{lrrrrrrr}
\toprule
\textbf{Bridge Name} & \textbf{LOC} & \textbf{Total lines} & \textbf{Public} & \textbf{External} & \textbf{Internal} & \textbf{Private} & \textbf{Global vars} \\
                     &             &                      & \textbf{Funcs}  & \textbf{Funcs}    & \textbf{Funcs}    & \textbf{Funcs}   & \textbf{Declared} \\
\midrule
Avalanche BridgeToken         & 155 & 226 & 9 & 0  & 0  & 1 & 6 \\
Wormhole BridgeImplementation & 630 & 776 & 19 & 3  & 14 & 0 & 0 \\
Arbitrum L1GatewayRouter      & 208 & 305 & 5 & 4  & 1  & 0 & 2 \\
Arbitrum L1ERC20Gateway       & 112 & 161 & 5 & 2  & 0  & 0 & 6 \\
Arbitrum L1CustomGateway      & 165 & 243 & 5 & 3  & 0  & 0 & 6 \\
Arbitrum L1WethGateway        & 78  & 92  & 2 & 1  & 4  & 0 & 2 \\
Stargate Router               & 280 & 323 & 0 & 17 & 4  & 0 & 5 \\
DeBridge DeBridgeGate         & 905 & 1110 & 7 & 23 & 15 & 0 & 27 \\
Across HubPool                & 610 & 1076 & 24 & 2  & 13 & 0 & 16 \\
Stargate Bridge               & 249 & 310 & 2 & 14 & 4  & 0 & 9 \\
Allbridge Bridge              & 215 & 320 & 1 & 11 & 2  & 0 & 5 \\
Nomad BridgeToken             & 150 & 245 & 7 & 5  & 0  & 0 & 6 \\
Base L1StandardBridge         & 260 & 321 & 8 & 8  & 6  & 0 & 2 \\
Optimism L1StandardBridge     & 220 & 324 & 3 & 9  & 6  & 1 & 3 \\
Hyperliquid Bridge2           & 570 & 840 & 17 & 15 & 2  & 14 & 18 \\
Meson BridgeV2                & 167 & 210 & 8 & 8  & 0  & 2 & 4 \\
\bottomrule
\end{tabular}}
\end{table}

\begin{table}[ht]
\caption{Analysis of external call behavior and defensive programming practices in bridge implementations}

\label{tab:callsAndErrors}
\centering
\scriptsize
\resizebox{\linewidth}{!}{
\begin{tabular}{lrrrrrrr}
\toprule
\textbf{Bridge Name} & \textbf{Ext Funcs} & \textbf{Low-level} & \textbf{Untrusted} & \textbf{Reentry Guard} & \textbf{Require/Assert} & \textbf{Custom Errors} & \textbf{Checks/Fn} \\
\midrule
Avalanche BridgeToken         & 5  & 0 & 0 & No  & 14 & 0  & 1.56 \\
Wormhole BridgeImplementation & 2  & 5 & 0 & Yes & 21 & 0  & 0.58 \\
Arbitrum L1GatewayRouter      & 3  & 0 & 0 & No  & 9  & 0  & 1 \\
Arbitrum L1ERC20Gateway       & 2  & 1 & 0 & Yes & 3  & 0  & 0.43 \\
Arbitrum L1CustomGateway      & 4  & 0 & 0 & Yes & 6  & 0  & 0.75 \\
Arbitrum L1WethGateway        & 0  & 0 & 0 & No  & 3  & 0  & 0.43 \\
Stargate Router               & 15 & 0 & 0 & No  & 8  & 0  & 0.38 \\
DeBridge DeBridgeGate         & 18 & 3 & 3 & Yes & 26 & 20 & 0.87 \\
Across HubPool                & 13 & 3 & 3 & Yes & 26 & 0  & 1 \\
Stargate Bridge               & 13 & 0 & 0 & No  & 8  & 0  & 0.5 \\
Allbridge Bridge              & 5  & 0 & 0 & No  & 11 & 0  & 0.9 \\
Nomad BridgeToken             & 4  & 0 & 0 & No  & 4  & 0  & 0.33 \\
Base L1StandardBridge         & 4  & 0 & 0 & No  & 4  & 0  & 0.25 \\
Optimism L1StandardBridge     & 4  & 0 & 0 & No  & 0  & 0  & 0 \\
Hyperliquid Bridge2           & 7  & 0 & 0 & Yes & 19 & 0  & 1 \\
Meson BridgeV2                & 7  & 1 & 1 & No  & 7  & 0  & 0.58 \\
\bottomrule
\end{tabular}
}
\end{table}

To assess the security robustness of bridge smart contracts, we conduct a \textit{static analysis} across key dimensions drawn from our layered attack surface model. Static analysis focuses on examining the contract's code without executing it, allowing us to detect vulnerabilities in logic, structure, and access control that may compromise the security priors \textit{bridge causality}, \textit{consistency}, and \textit{token peg integrity}. We analyze the bridges in terms of i) access control and code structure metrics  (Table~\ref{tab:accessControl}), ii) function visibility and variable usage (Table~\ref{tab:codeAndFunctionMetrics}) and iii) call behavior and defensive programming constructs (Table~\ref{tab:callsAndErrors}). This analysis maps to several high-impact attack vectors from Table~\ref{tab:attack_vectors_categorized}, notably: V1: reentrancy attacks, V3: access control and forged account flaws, V5: unsafe external call exploits, V7: contract upgrade and misconfiguration risks. We identify several notable patterns and anomalies that offer insights into the heterogeneous security postures and implementation philosophies across bridges.

One immediate observation in Table~\ref{tab:callsAndErrors} is the inconsistent use of reentrancy protection mechanisms. Despite having multiple externally callable functions, contracts such as \textit{Stargate Router} (15 external functions) and \textit{Stargate Bridge} (13 external functions) do not implement any form of reentrancy guard. In contrast, similarly sized contracts like \textit{DeBridgeGate} and \textit{Across HubPool} use such guards appropriately. The absence of reentrancy protection in these high-exposure contracts reflects a reliance on architectural assumptions. 

We also find a curious disconnect in Table~\ref{tab:accessControl} between the use of role-based access control and the deployment of Solidity modifiers. For example, \textit{Hyperliquid Bridge2} and Avalanche \textit{BridgeToken} both report role-based access (\textquote{RoleBased = Yes}) but have zero modifiers. This suggests that access restrictions may be implemented inline via \enquote{require} statements rather than modularized through modifiers, leading to less auditable and reusable control logic. In contrast, \textit{Nomad BridgeToken} and \textit{Allbridge} show zero modifier usage and lack role-based protection, increasing susceptibility to unauthorized access or logic misbehavior (V3). \textit{Nomad}'s exploit history confirms this: a faulty initialization allowed any user to spoof legitimate senders. 

A particularly unusual case in Table~\ref{tab:callsAndErrors} is the \textit{Optimism  L1Standard Bridge} contract, which lacks any require or assert statements despite being externally callable and declaring role-based control. While this absence would typically raise concerns (since most bridges include basic sanity checks), it reflects a deliberate design choice. Optimism and \textit{Base L1StandardBridge} rely on rollup-native architecture, where Ethereum layer-1 consensus enforces external validation. This structural safeguard reduces reliance on in-contract defensive coding against V3 and V5 vectors, but the lack of internal guards still poses risks if such contracts are reused outside this tightly scoped context.

In terms of architectural modularity, some contracts in Table~\ref{tab:codeAndFunctionMetrics}  show a preference for deeply internalized logic structures. \textit{Wormhole BridgeImplementation} and \textit{Hyperliquid Bridge2} contain large numbers of internal or private functions (13-15 each), indicating encapsulated logic. While this may reflect thoughtful separation of concerns, it also complicates external audits and transparency.

We further observe disproportionate uses of access control infrastructure relative to contract size in Tables~\ref{tab:accessControl} and \ref{tab:codeAndFunctionMetrics}. \textit{Meson BridgeV2}, with just 167 lines of code, defines four modifiers and a complete role-based access layer. In contrast, \textit{Stargate Router} spans over 280 lines but employs one modifier. Such disparity shows divergent security philosophies among bridge developers, with some opting for granular control at all costs and others prioritizing operational simplicity.

A notable structural pattern in Table~\ref{tab:codeAndFunctionMetrics} appears in the allocation of global versus local variables. \textit{DeBridgeGate} maintains a high number of global state variables (27), reflecting rich on-chain logic and persistent data tracking. Conversely, \textit{Wormhole Bridge} holds 58 local variables but defines no global state, potentially due to reliance on proxy patterns or off-chain state.

Low-level calls (\texttt{call}, \texttt{delegatecall}, \texttt{staticcall}) introduce attack surface via vector V5. \textit{Wormhole}, \textit{DeBridge}, and \textit{Across} use them multiple times, sometimes in the presence of untrusted inputs. While \texttt{call} can be necessary (e.g., for token forwarding), its misuse or failure to handle return values correctly has led to major incidents such as the \textit{Poly Network} and \textit{Qubit} \cite{qubit_finance} exploits. Bridges like \textit{Avalanche BridgeToken}, \textit{Nomad}, and \textit{Allbridge} avoid low-level calls entirely, reducing exposure to unsafe execution paths.

Lastly, we compute the average number of checks per function (\#Checks/\#Functions) as a coarse proxy for defensive programming density. Contracts such as \textit{Avalanche BridgeToken} (1.56) and \textit{Hyperliquid Bridge2} (1.00) stand out as aggressively guarded, while \textit{Optimism L1StandardBridge} (0.00) again reflects an absence of such measures. These findings indicate varying degrees of maturity in secure smart contract engineering practices.

Static analysis reveals that bridge contracts vary widely in their defensive quality.  \textit{DeBridgeGate} sets a high bar for smart contract hygiene, leveraging modifiers, role-based controls, and comprehensive assertions to guard its complex logic. \textit{Wormhole} and \textit{Across} balance complexity with modular defenses but remain susceptible to implementation flaws, as demonstrated in \textit{Wormhole}’s historical signature validation failure. \textit{Stargate}, \textit{Nomad}, and \textit{Allbridge} show critical gaps in protective constructs, leaving them exposed to known attack vectors. These are particularly concerning given their public deployment.

\subsection{Transaction Analysis of Bridges}

\begin{figure}[h]
    \centering
\includegraphics[width=0.9\linewidth]{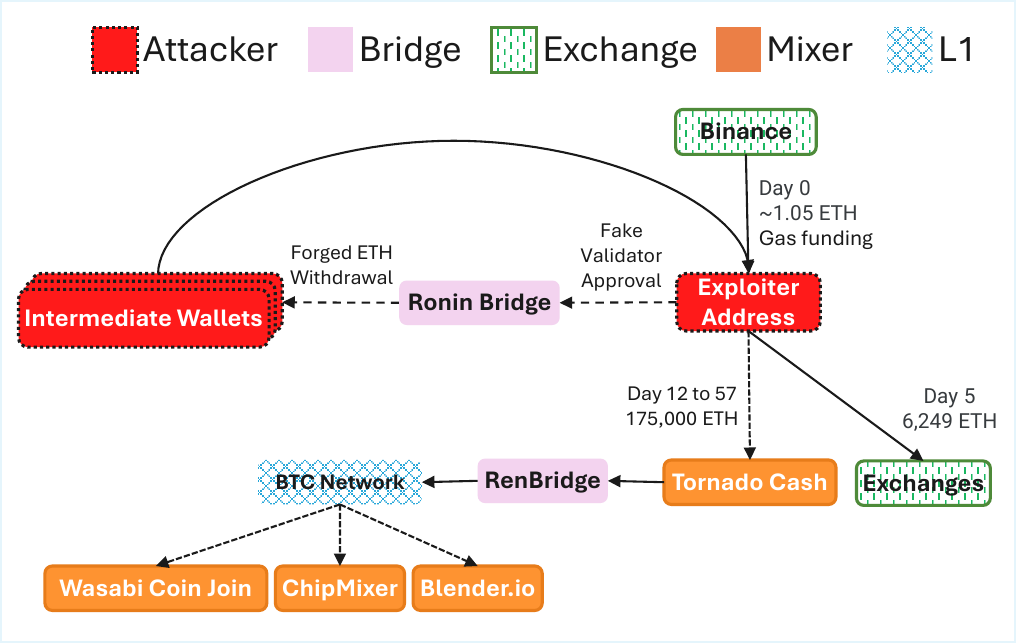}
\caption{
Attack flow of the \textit{Ronin Bridge} exploit. The attacker gained control of 5 out of 9 Ronin Bridge validators and used them to sign and submit forged withdrawals. The stolen funds amounted to 173,600 ETH, along with an additional 25.5 million USDC that was exchanged for 8,564 ETH.}

    \label{fig:ronin-attack}
\end{figure}

To better understand the on-chain behavior of bridge exploiters, we perform a transaction-level analysis focused on Ethereum addresses used by attackers, as Ethereum is commonly chosen for fund collection due to its exchange access and availability of mixing services. We obtain Ethereum transaction data by running a full node using Geth (\url{https://github.com/ethereum/go-ethereum}) and parsing all transactions via Ethereum-ETL (\url{https://github.com/blockchain-etl/ethereum-etl}). We visualize the attack patterns in transaction graphs to capture attacker behavior. The analysis includes both externally owned account transactions and internal transactions, covering pre-attack, attack, and post-attack phases, where the phase lengths span the full Ethereum history up to March 2025.

Due to space limitations, an example transaction subgraph is shown in \autoref {fig:ronin-attack}, and seven other notable attacks are visualized in \autoref{sec:realAttacks}. In the figure, the \textit{Ronin bridge} to Ethereum was victim to a validator key compromise in an attack now attributed to North Korea’s Lazarus Group. \textit{Ronin} used a 5-of-9 multisig validation for bridge withdrawals. Attackers compromised five private keys through social engineering. Once in control, they issued two fraudulent transactions, draining 173,600 ETH and 25.5M USDC (worth approximately \$624 million at the time) from the \textit{Ronin bridge} in a single stroke. This event, the largest DeFi hack ever, was essentially a failure of the bridge’s trust model: the off-chain validators were assumed honest, but the minimal quorum and centralized key management (a single entity controlled 4 of 9 validators) made it easy for an attacker to breach. The breach went unnoticed for six days until a user discovered it.

Most exploiter addresses have little to no previous on-chain activity. In many cases, the wallets were newly created and funded with just a sufficient amount to cover gas fees. The average number of pre-attack transactions was 28, suggesting that even the more \textquote{active} wallets were only lightly used. Some of these initial funds even originated from mixers, most notably Tornado Cash \cite{tornado_cash} in the \textit{Wormhole} bridge hack, where the exploiter received ETH from Tornado Cash before launching the attack, an indication of deliberate obfuscation of the funding source.  In other instances, such as the \textit{Nomad} and \textit{Horizon bridge} \cite{horizon_bridge} hacks, attackers used unlabeled intermediate wallets, further complicating traceability, and as a strategy to hide the funding sources.

During the attack period, the behavior is highly focused. The attacker often interacts directly with the bridge contract via a few function calls, most of the time, only one forged withdrawal or proof verification. In attacks such as \textit{Nomad}, we observe many small transactions from multiple addresses exploiting the same vulnerability after it became public, showing replication once the vulnerability was made public.

Internal transactions are important for creating the contract call sequences. The contract call traces highlight how the attack is carried out at the smart contract level, often using call stacks involving specific functions like \textit{withdrawERC20For()} or \textit{verifyProof()}. This helps verify the exploit strategies and confirms the role of compromises or logic flaws.

After the exploit, the stolen funds are moved rapidly to avoid detection. Some exploiters transferred stolen funds through multiple intermediate wallets before returning them to a central exploiter address, for instance, in the \textit{Ronin Bridge} hack. Some others bridged funds to other blockchains such as Ethereum, Polygon, and Avalanche using bridges. Many proceeds are laundered through privacy tools like Tornado Cash or converted via decentralized exchanges (DEXs).

After initial laundering, several attacker addresses remain active. In our analysis, 6 out of 18 wallets continued transacting for more than one year after the exploit, indicating possible long-term usage or reactivation of compromised wallets.

Although certain early detection remains challenging, our analysis identifies behavioral clues that may serve as early warning signs of an upcoming exploit. A common pattern is the initial funding of a wallet with just enough ETH to cover gas fees, which is sourced from mixers. These wallets usually indicate little to no prior activity. Moreover, in order to verify system reactions or contract behavior, numerous low-value probing transactions are occasionally issued prior to the main exploit. These patterns could be leveraged to flag suspicious activity in near-real time.

\section{Security Analysis of Bridge Attacks}
Our analysis reveals that bridge causality and consistency priors are violated in bridge attacks. Most attacks involve violations of the cross-chain causality prior. While Equation~\eqref{eq:token_equal2}, the peg prior, is formally violated as a result of these attacks, the peg itself is not the target of the attacks. That is, attackers do not manipulate token prices or attempt to create price divergence across chains. Instead, the violations arise indirectly because tokens are minted or released without proper backing, breaking the assumption of value parity that the peg prior encodes. Due to space limitations, we describe the attacks in Appendix~\ref{sec:realAttacks}, and show a meta-analysis in Table~\ref{tab:bridge_attacks}.

\subsubsection*{Attack Vector Analysis.} \autoref{tab:bridge_attacks} shows that V3 (access control) and V13 (key leakage) have been exploited 10 times each. The exploits reveal two broad categories of failures: (1) Off-chain trust failures and (2) On-chain validation failures. The off-chain trust failures encompass all incidents where the bridge’s security relied on individuals or off-chain systems that were compromised, notably the multisig key compromises (\textit{Ronin}, \textit{Harmony}, \textit{Multichain}, \textit{Orbit}) and related cases. In \textit{Ronin} and \textit{Harmony}, the sources of failure were outside the blockchain: hackers penetrated the organizations controlling the validators and obtained the private keys needed to sign fake transactions. These attacks did not exploit a bug in code; they exploited insufficient decentralization and operational security. Essentially, the assumption that a small set of validators would remain honest was violated. When 5 of 9 \textit{Ronin} validators and 2 of 5 \textit{Harmony} signers turned malicious (via key theft), the bridge smart contracts on-chain duly obeyed the malicious signatures, an example of how improperly validated inputs led the system to execute unintended behavior. Similarly, \textit{Multichain}’s collapse was an off-chain failure: the system’s architecture secretly concentrated too much trust in one individual. These illustrate that trusted or trust-minimized bridges are vulnerable by design. They introduce new trust points (keys, signers, servers) that attackers target through phishing, insider collusion, malware and more. The on-chain validation failures, on the other hand, cover exploits like \textit{Poly Network}, \textit{Wormhole}, \textit{Nomad}, \textit{BSC Token Hub}, \textit{Qubit}, where the bridge smart contracts or crypto verification routines on one of the chains had a flaw. In these cases, the attacker manipulated the code logic (e.g., forging a message that the contract erroneously accepted as valid). For \textit{Poly}, the bug was an unchecked external call in a privileged contract; for \textit{Wormhole}, a bypassed signature verification on Solana; for \textit{Nomad}, an incorrect initialization setting that trusted everyone; for BSC, a flawed light-client proof verification.

Despite different mechanics, these represent software/security bugs in the bridge implementation. The failures occurred either on the source chain contract (for \textit{Poly} on Ethereum) or the destination chain contract (for \textit{Wormhole}, \textit{Nomad}, \textit{Qubit}) or the intermediate relay logic (as in \textit{BSC}’s light client). Crucially, these attack vectors map to points of validation in the bridge architecture.  

In several cases (\textit{Wormhole}, \textit{Nomad}, \textit{BSC}), the violated core assumption was that the smart contract correctly validates the cross-chain proof. These were not fundamental cryptographic failures (algorithms such as \textit{secp256k1} were sound), but errors in how the algorithms were applied. This suggests many bridge exploits are avoidable with more rigorous software engineering, such as comprehensive audits, formal verification of bridge contracts, and in-depth defense (e.g., requiring multiple independent checks of a proof). 

Importantly, many of these incidents exhibit high damage-to-effort ratios. The impact of a successful exploit has often reached hundreds of millions of dollars, while the effort required, whether key theft, phishing, or exploiting a poorly audited contract, remains relatively low. These high $\text{der}(V)$ values underscore why such attack vectors persist across different bridge designs and why they dominate the historical incident landscape.

\subsubsection*{RQ1: Are bridge attacks structurally preventable, or can they be mitigated by design?}  Yes, the evidence suggests that certain designs are far more robust. Notably, \textbf{no major exploits have occurred on fully trustless light-client bridges or rollup bridges}. For example, Cosmos’s L1-to-L1 IBC channels have operated without incident between dozens of chains since 2021 (a lossless reentrancy bug was discovered early~\cite{lindrea2024cosmos}), and Ethereum’s (young) rollup L1-to-L2 bridges have not been breached to date. Avalanche L1-to-L1 bridge to Ethereum continues functioning without any incidents. These systems benefit from minimal attack surface: there are no external signers to target, and the verification logic is often simpler or inherited from consensus (IBC uses tendermint-like light clients, which are well-vetted; rollups use post validity/fraud proofs checked by L1). This suggests that trustless designs inherently eliminate entire classes of attacks (notably, attack vector V3 of Table~\ref{tab:attack_vectors_categorized}).  These architectures inherently exclude common attack vectors like compromised keys or poorly controlled access mechanisms, offering significantly lower $\text{der}(V)$ values.

By contrast, most attacks struck bridges that added a new layer of validation on top of the two blockchains. Each additional component (e.g., a multisig, an off-chain oracle network, or a novel smart contract) introduced complexity and potential weakness. Our analysis shows evidence that the more a bridge can lean on intrinsic blockchain security (native verification), the safer it will be. In contrast, trust-minimized L1-to-L1 bridges are often only as strong as their validator set, which could be much weaker than either chain’s consensus. A quantified framework by LI.FI found in 2022 that many so-called trust-minimized bridges still effectively rely on a majority of a few validators, leaving them vulnerable to 51\% attacks or key compromises~\cite{ChandMurdock2022Spectrum}. Indeed, \textit{Ronin} and \textit{Harmony} hacks demonstrated that a <10-member validator set offers poor resilience. Increasing the set (e.g., Axelar has ~70+ validators) improves security, but unless those validators are as robust as a full blockchain consensus (with hundreds of nodes and economic security), the bridge remains a tempting honeypot for hackers. Therefore, trust-minimized bridges offer better resilience than naive trusted models, but they still concentrate trust among a group.

Properly decentralized networks with bonding (to slash malicious validators) are harder to attack; for instance, a hack of the scale of \textit{Ronin} on a larger bonded validator bridge has not occurred yet. However,  Ethereum-to-Solana (L1-to-L1) \textit{Wormhole}’s Feb 2022 case reminds us that even absent collusion, a bug in the validation code can be just as catastrophic. This blurs the line between \textit{fundamental} and \textit{implementation} issues: we argue that the complexity of blockchain bridge protocols is a fundamental weakness, as more complex logic yields a higher chance of errors.  

Finally, the prevalence of hacks has spurred research into failure-resistant bridges. For example, designs like \textquote{circuit breaker} bridges propose that if an unusually large withdrawal is attempted, the bridge only allows it after a delay or community vote, to mitigate instant draining. Others suggest blending multiple validation mechanisms, e.g., requiring both an off-chain multi-sig and an on-chain light client to agree, which could guard against either one failing alone (at the cost of added complexity). From a benchmarking perspective, we argue that diversity in validation (multi-layer security) might drastically reduce the probability of a successful exploit, albeit with performance trade-offs. Moreover, as bridges adopt techniques like zero-knowledge (zk) proofs for validation (e.g., \textit{zkBridge} frameworks), some traditional attack vectors might be closed (a zk proof can attest to a source chain state without any trusted signer at all). Such bridges could offer the holy grail: trustless interoperability between any two chains, with cryptographic guarantees and no reliance on human validators. Early prototypes (like \textit{zkRelay} and light-client circuits) are promising, but they require heavy computation and are just entering practical deployment. 

\subsubsection*{RQ2: Are bridge attacks detectable in real-time?} A surprising inability of the blockchain ecosystem is the lack of a strong analytics foundation that could monitor bridges for potential attacks. In the \textit{Ronin} hack, it took days (and a user complaint) to notice that \$620M had vanished. An obvious insight here is that real-time auditing and alerts are essential, but it raises a more fundamental question: why have such systems not been developed, despite the substantial capital invested in decentralized finance? 

We outline four reasons. First, many DeFi projects operate with lean teams focused primarily on product development and protocol maintenance, with their expertise originating from smart contract development. Allocating resources to build and maintain sophisticated monitoring systems often falls outside their immediate capabilities. Second, implementing real-time detection systems requires a scalable analytics infrastructure capable of handling large volumes of blockchain data. This necessitates expertise in data engineering and analytics, skills that may not be prevalent among smart contract developers. Third, analyzing blockchain data presents unique challenges, including model scalability and the accuracy of anomaly detection.  Traditional data analytics methods are often ill-suited to address the decentralized and heterogeneous structure of blockchain networks. Moreover, only a limited number of research efforts have focused on foundational questions specific to blockchain, such as identifying influential addresses within transaction networks (e.g., Alphacore~\cite{victor2021alphacore}; see~\cite{azad2024machine} for a recent survey).
 Lastly, investing in detection systems does not directly generate revenue, making it less appealing for projects to allocate funds toward such initiatives. While white-hat hacking offers some incentives, it often involves high effort with uncertain rewards. Nonetheless, advancements in automated security testing, particularly fuzzing~\cite{fuzzing}, are emerging as promising solutions. Fuzzing involves providing invalid, unexpected, or random data inputs to smart contracts to uncover vulnerabilities. 

The anomaly detection effort must eventually fall on bridge managers. Some bridges (e.g., \textit{Gravity}, \textit{Chainlink CCIP} \cite{chainlinkCCIP2025}) are now deploying independent, on-chain monitoring tools that can pause a bridge if suspicious activity is detected (for example, if an invariant is violated or if an unrecognized validator signature appears). In centralized finance, analogous systems flag large transfers for manual review; bridges could implement similar safeguards or circuit-breakers, but this reintroduces trust if not done in a decentralized way. 

\subsubsection*{RQ3: What mechanisms exist to mitigate damage and minimize asset loss?} There is no standard set of mechanisms for limiting damage once a bridge is under attack, and mitigation strategies remain ad hoc and uneven across protocols. \textit{Gravity} and \textit{Chainlink} are developing anomaly detection tools that can halt bridge operations in the face of an attack. In Celer, there is a buffer delay period during which transactions can be independently verified before a transfer is finalized. If inconsistencies are detected, the transfer can be halted, offering an added layer of fraud prevention. Post-factum, blockchain transparency does aid forensics; after these hacks, investigators (e.g., \textit{ZachXBT} on Twitter) traced funds through addresses and often identified suspects or recovered portions. But prevention and rapid response are clearly preferable to relying on clawbacks. Benchmark evaluations must incorporate attack detection latency as a metric; i.e., how quickly a breach is noticed and halted. Shorter detection times (as in BSC’s case, hours) can significantly limit losses compared to delayed discovery (as in \textit{Ronin}).

The bridge attacks of 2021–2025 underscore a core tension in blockchain interoperability: security vs. connectivity. Bridges expand what users can do (moving assets across chains), but they also expand the attack surface beyond any single blockchain. Our survey of exploits indicates that while many were due to low-hanging fruit bugs or a lack of operational security, there are also intrinsic risks whenever custody of funds shifts to a separate system. 

\subsubsection*{RQ4: What is the long-term outlook for bridge security? What foundational steps are needed to design formally verifiable or provably secure bridge protocols?} If truly trustless (or provably secure) bridge mechanisms can be standardized, we expect far fewer catastrophic failures. Until then, however, bridge designers must assume that attacks are a matter of when, not if, and design accordingly. This means employing rigorous audits, fail-safes, decentralized validation, and continuous monitoring as part of any cross-chain system. Past events suggest that increasing decentralization and using established, simple verification methods (light clients, chain consensus) correlate with higher security, whereas highly complex or centralized bridges have correlated with the largest failures. 

On the other hand, the multisig key hacks raise the question of whether those bridges were destined to fail, i.e., was it an inevitability that eventually some signer’s key would get compromised? One can argue that any valuable honeypot (whether a contract with billions of locked value or a key controlling billions) will, over time, face relentless attack until a weakness is found. Even with multi-party validators, if the set is small or the keys are not protected by robust hardware security modules and operational security, attackers have a focal point to strike. Thus, fundamental architecture choices (like using a 2-of-5 multisig with hot keys) can be seen as structural vulnerabilities, not in the theoretical sense, but in practical terms of presenting an irresistibly weak link (humans with keys) compared to the surrounding blockchain’s security.

A lesson learned with pain is that trusted bridges must be avoided due to significant management risks~\cite{kiepuszewski2022rollups}. Rollups offer a trustless solution but are limited to chains with compatible runtimes. New protocols like the \textit{IBC} protocol of Cosmos and \textit{Chainlink}'s Cross-Chain Interoperability Protocol aim to bridge heterogeneous chains. However, even those new-generation solutions introduce varying trust assumptions. IBC uses light clients when both chains natively support it. However, when adapted for non-IBC-compatible chains, additional trust assumptions are required. CCIP, on the other hand, relies on Chainlink's decentralized oracle network to facilitate cross-chain interactions, introducing a trust-minimized model where users depend on the honesty and reliability of the oracle nodes. 

\section{Conclusion}
\label{sec:conclusion}

Blockchain bridges have become indispensable infrastructure for cross-chain interoperability, yet they remain among the most vulnerable components of the decentralized ecosystem. Our study presents the first comprehensive, data-driven systematization of bridge security, combining formal modeling, large-scale static analysis, and empirical transaction-level investigations. We show that the majority of bridge attacks violate core security priors, particularly causality and consistency, without breaching the value peg itself. 

Our analysis reveals that most successful exploits stem from two dominant classes: off-chain trust failures and on-chain validation bugs. These vectors frequently exhibit high damage-to-effort ratios, which we formalize through the $\text{der}(V)$ metric. We also find that architectural design plays a decisive role in resilience. Trustless models, such as light-client and rollup-native bridges, have so far withstood real-world adversarial conditions, while trusted and loosely trust-minimized bridges remain disproportionately vulnerable. Emerging defense mechanisms like circuit breakers, buffer delays, and hybrid validation schemes offer promise but are inconsistently implemented and lack formal standardization.

Finally, we identify critical gaps in real-time detection and mitigation. Despite billions in locked value, most bridges lack robust monitoring infrastructure or containment protocols, and responses to attacks are often delayed and improvised. Bridging this gap will require new research into on-chain anomaly detection, decentralized fail-safe mechanisms, and benchmarking frameworks that account for detection latency and systemic risk.

We hope this work provides a foundation for rethinking how cross-chain systems are built, evaluated, and secured. As bridges continue to evolve, their long-term viability will depend not only on throughput and composability but also on their ability to withstand failure in adversarial conditions. Bridging across blockchains should not mean bridging across trust assumptions.

\bibliography{main}
\bibliographystyle{ieeetr}

\newpage
\section*{Appendix}

\appendix
\section{Related Work on Scalability, Interoperability and Bridge Designs}
\label{sec:extendedRelated}

Due to space constraints in the main text, this appendix outlines three additional research areas relevant to our work. We will start this section by analyzing scalability and interoperability studies.

\subsection{Blockchain Scalability and Off-Chain Protocols}

Recent bridge proposals aim to shift trust off-chain and reduce exposure by using succinct proofs, validator diversity, or formal verification. The literature reflects an active effort to reconcile decentralization, trust minimization, and performance in cross-chain systems.  For example, McCorry et al.~\cite{McCorry2021Sok:Blockchains} and Kim et al.~\cite{HangukTongsinHakhoe2018ICTCKorea} explore off-chain protocols and categorize scalability solutions, respectively.   Blockchain interoperability, the ability for different blockchain systems to interact, is a core focus of Lee et al.~\cite{Lee2023SoK:Hacks}, Wang~\cite{Wang2023ExploringSurvey}, Belchior et al.~\cite{Belchior2022ATrends}, Zamyatin et al.~\cite{Zamyatin2019SoK:Ledgers}, Qasse et al.~\cite{Qasse2019InterSurvey}, Monika et al.~\cite{Niranjan2020ProceedingsConference}, Schulte et al.~\cite{Schulte2019TowardsInteroperability}, Hardjono et al.~\cite{Hardjono2020TowardSystems}, Lafourcade et al.~\cite{Lafourcade2020AboutInteroperability}, and Duan et al.~\cite{Duan2023AttacksSurvey}. These studies highlight the challenges in ensuring ACID properties across diverse blockchains~\cite{Wang2023ExploringSurvey}, the limitations due to blockchain isolation~\cite{Qasse2019InterSurvey}, and the lack of interaction between different ledgers and legacy systems~\cite{Niranjan2020ProceedingsConference}. Methods vary: Lee et al.~\cite{Lee2023SoK:Hacks} focus on bridges, Zamyatin et al.~\cite{Zamyatin2019SoK:Ledgers} on communication protocols, Schulte et al.~\cite{Schulte2019TowardsInteroperability} suggest transitioning from closed to open systems, while Hardjono et al.~\cite{Hardjono2020TowardSystems} propose standardized architectures. Lafourcade et al.~\cite{Lafourcade2020AboutInteroperability} advocate for a unified blockchain.

\subsection{Security and Privacy Concerns in Blockchain Interoperability}

Security risks inherent to cross-chain protocols are discussed in Lee et al.~\cite{Lee2023SoK:Hacks}, Zamyatin et al.~\cite{Zamyatin2019SoK:Ledgers}, and Duan et al.~\cite{Duan2023AttacksSurvey}. Lee et al. examine bridge-specific vulnerabilities, Zamyatin et al. propose a trust-evaluation framework, and Duan et al. categorize systemic attack vectors. Borkowski et al.~\cite{Borkowski2018TowardsTAST} provide a comprehensive review of cross-blockchain technologies through the TAST research project. Caldarelli~\cite{Caldarelli2022WrappingTokens} introduces wrapped tokens for interoperability, though with reintroduced trust assumptions. Borkowski et al.~\cite{Borkowski2019Cross-BlockchainOutlook} explore atomic cross-chain transfers as a means to link otherwise isolated systems.

\subsection{Protocol Design, Relays, and Bridges}

Pioneering work by Herlihy et al.~\cite{Herlihy2019} introduced the notion of cross-chain deals, formulating atomic cross-chain transactions under adversarial conditions. Their protocol ensures that mutually distrusting parties exchanging assets across separate blockchains either all succeed or all abort, without a trusted intermediary. Atomic cross-chain swaps and hashed timelock contracts emerged from this foundation~\cite{Herlihy2019}, with further developments such as Anonymous Multi-Hop Locks~\cite{Malavolta2019} and Universal Atomic Swaps~\cite{Thyagarajan2022} generalizing these mechanisms for broader interoperability.

Relay and light-client protocols enable cross-chain state verification. BTC Relay provided a working, though costly, Ethereum-based verifier for Bitcoin transactions. Efficiency enhancements such as FlyClient~\cite{Bunz2020} and NiPoPoW~\cite{Kiayias2020,Scaffino2025} offer logarithmic communication overhead and succinct proofs. \textit{zkBridge} demonstrates state transition verification via \textit{zk-SNARKs}~\cite{Xie2022}, showcasing low-cost, trust-minimized bridging across Ethereum and Cosmos.

Asset-based bridges also remain prominent. XCLAIM~\cite{Zamyatin2019} introduced trustless, collateral-backed cross-chain assets. While projects like \textit{Wormhole} and \textit{PolyNetwork} use validator-based event notaries, they pose risks if quorum assumptions fail. These are complemented by formal frameworks such as \textit{zkRelay} and \textit{ETH Relay}, and conceptual models like Bitcontracts for Bitcoin–Ethereum smart contract interoperability.

Security of these bridges is increasingly scrutinized. SmartAxe~\cite{Liao2024} detects Cross-Chain Vulnerabilities through fine-grained static analysis. Attacks like the \$320M \textit{Wormhole} and \$610M \textit{PolyNetwork} incidents illustrate the cost of flawed validation and inadequate trust assumptions. Alba~\cite{Scaffino2025} proposes \textit{Pay2Chain} bridges that combine off-chain payment guarantees with cryptographic proofs to reduce bridge attack surfaces.

\begin{table*}[t]
\caption{Notations for Bridge Mechanisms}
\label{tab:notation}
\renewcommand{\arraystretch}{1.2}
\begin{tabular*}{\textwidth}{@{\extracolsep{\fill}} l l}
\toprule
\textbf{Notation} & \textbf{Description} \\
\midrule
$b$ & A blockchain. \\
$N_p$ & Set of peer-to-peer network nodes in a blockchain. \\
$N_a$ & Set of address nodes in a blockchain. \\
$\mathbb{H}$ & Chronological history of transactions in a blockchain. \\
$R$ & Consensus rules governing transaction creation in a blockchain. \\
$\theta$ & A token, with state defined by its transaction history $\mathbb{H}_{\theta}$. \\
$t$ & A point in time. \\
$tx$ & A transaction, defined as $tx = (\theta, v, a_1, a_2, t_{tx})$, transferring value $v$ of token $\theta$ from address $a_1$ to $a_2$ at time $t_{tx}$. \\
$a$ & A user address. \\
$b_1, b_2$ & Source and destination blockchains in a bridge. \\
$\mathcal{I}$ & Implementation mechanism of a bridge. \\
$\mathbb{B}_{1 \leftrightarrow 2}$ & A blockchain bridge between $b_1$ and $b_2$, defined as $\mathbb{B}_{1 \leftrightarrow 2} = \{b_1, b_2, \mathcal{I}\}$. \\
$\chi_t$ & The state of a cross-chain transaction at time $t$. \\
$TS$ & A node trust set. \\
$c$ & A smart contract address. \\
$v$ & Quantity of a token $\theta$. \\
$a \mapsto v$ & An address $a$ holds value $v$ of a token. \\
$F_{\text{forward}}$ & Bridge fee for forward transfer, defined as $F_{\text{forward}} = f_1 + f_2 + f^*$. \\
$F_{\text{reverse}}$ & Bridge fee for reverse transfer. \\
$f_1, f_2$ & Transaction processing fees on $b_1$ and $b_2$, respectively. \\
$f^*$ & Bridge operation fee. \\
$d$ & Duration associated with a blockchain operation. \\
$\mathcal{T}_\text{off}$ & Off-chain implementation. \\
$d_{b_x}$ & Block confirmation durations for $b_x$. \\
$d^*$ & Time for off-chain mechanism to process and signal a transfer. \\
$\T$ & Bridge trust set. \\
$\T_{\textit{src}}, \T_{\textit{dest}}$ & Trust assumptions related to the source and destination blockchains. \\
$\T_{\textit{off}}$ & Trust assumptions in off-chain mechanisms. \\
$L$ & Light client verifier. \\
$M$ & Merkle proof. \\
$N$ & Notary validator set. \\
$F$ & Total bridge fee for a transaction. \\
$D$ & Delay introduced in a bridge due to proof. \\
$Q$ & Quantity of token being referenced. \\
$MP$ & Message propagation. \\
$V$ & Attack vector. \\
$\mathbb{E}(C)$ & Expected loss or cost. \\
$\mathbb{E}(L)$ & Expected value at risk in a $CCB$. \\
$SCs$ & Set of smart contracts. \\
\bottomrule
\end{tabular*}
\end{table*}

\section{Proof of the Failure Theorem}
\label{app:proof}
\begin{theorem}
If a bridge experiences an attack or failure, then Equation~\eqref{eq:token_equal2} is always violated.
\end{theorem}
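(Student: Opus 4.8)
The plan is to unfold Definitions~\ref{attackdef} and \ref{failuredef}, which both assert that an attack or failure occurs precisely when at least one of \eqref{eq:token_equal2}, \eqref{eq:causalityeq1}, \eqref{eq:crosschain1}, or \eqref{eq:crosschain2} is violated; the two definitions differ only in whether the cause is malicious, which is irrelevant to the conclusion. Hence the theorem reduces to the purely logical claim that a violation of \emph{any one} of the four equations forces a violation of \eqref{eq:token_equal2}. Spelling this out, it suffices to establish the three implications that a violation of causality \eqref{eq:causalityeq1}, or of either consistency constraint \eqref{eq:crosschain1} or \eqref{eq:crosschain2}, individually entails a violation of the peg \eqref{eq:token_equal2}; the branch in which \eqref{eq:token_equal2} is itself the violated equation is immediate.

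First I would dispatch that trivial branch and then treat the structural ones. The key observation is that the peg \eqref{eq:token_equal2} is an aggregate value-preservation identity: it holds exactly when the value locked or burned at $c_1$ on $b_1$ is matched, unit for unit, by the value minted or released at $c_2$ on $b_2$. The structural priors are precisely what underwrite this matching. Causality \eqref{eq:causalityeq1} asserts a bijection between the transactions crossing $c_1$ and those crossing $c_2$, while the consistency constraints \eqref{eq:crosschain1} and \eqref{eq:crosschain2} enforce the strict one-to-one correspondence between each locked $\theta_1$ and each minted $\theta_2$. I would therefore argue that breaking any of these correspondences produces either unbacked minting on $b_2$ (so the circulating $v_2$ exceeds the collateralized supply) or stranded collateral on $b_1$ (so $v_1$ exceeds the redeemable supply), and in either case the quantities entering \eqref{eq:token_equal2} can no longer balance.

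The hard part will be closing the gap between these combinatorial violations and the \emph{price-weighted} identity in \eqref{eq:token_equal2}, since a naive reading permits the prices $\textit{price}(\theta_1,t)$ and $\textit{price}(\theta_2,t)$ to adjust so that the product equality survives even when $v_1 \neq v_2$. To rule this out I would make explicit the modeling assumption already latent in the definition of the peg, namely that one $\theta_1$ is redeemable for exactly one $\theta_2$ only so long as the backing is intact; once minting or release occurs without a matching lock, $\theta_2$ ceases to be fully collateralized and the pegged redemption rate can no longer be sustained. Formalizing this as a lemma---that full one-to-one backing is \emph{necessary} for the price-weighted equality to hold for all $t$---lets me conclude that any loss of backing, whether through a causality break or a consistency break, forces $v_1 \cdot \textit{price}(\theta_1,t) \not\equiv v_2 \cdot \textit{price}(\theta_2,t)$ at some $t$, which is exactly a violation of \eqref{eq:token_equal2}. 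Combining the trivial branch with the three structural branches then yields the theorem.
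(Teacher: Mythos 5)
Your proposal follows essentially the same route as the paper's proof in Appendix~\ref{app:proof}: unfold Definitions~\ref{attackdef} and~\ref{failuredef}, observe that maliciousness is irrelevant, and then argue by cases that a violation of causality~\eqref{eq:causalityeq1} or of either consistency constraint~\eqref{eq:crosschain1}, \eqref{eq:crosschain2} yields unbacked minting or stranded collateral, hence a broken peg~\eqref{eq:token_equal2}. Where you genuinely diverge is in tightening two spots the paper leaves loose. First, you explicitly dispatch the branch in which \eqref{eq:token_equal2} is itself the violated equation; the paper's definitions allow this case, but its proof silently asserts that an attack or failure implies a violation of one of \eqref{eq:causalityeq1}--\eqref{eq:crosschain2} only, so your version is the one that is actually complete relative to the stated definitions. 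Second, and more substantively, you identify the real gap in the paper's argument: it passes directly from a quantity mismatch ($v_2 > v_1$ or $v_1 > v_2$) to a violation of the \emph{price-weighted} identity, which does not follow unless one rules out prices adjusting to compensate. Your proposed lemma --- that full one-to-one backing is necessary for the price-weighted equality to hold for all $t$ --- is exactly the missing bridge between the combinatorial violations and \eqref{eq:token_equal2}; note, though, that as you frame it this lemma is a modeling assumption made explicit rather than something provable from the paper's formalism, so an honest write-up should state it as a hypothesis of the theorem. In short: same decomposition, but your version is logically complete where the paper's is a sketch; the paper's version buys brevity at the cost of the two gaps you patched.
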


\begin{proof}
Equation~\eqref{eq:token_equal2} formalizes token parity across two blockchains $b_1$ and $b_2$, requiring that the locked value on $b_1$ matches the released or minted value on $b_2$:
\[
v_1 \cdot \textit{price}(\theta_1, t) \equiv v_2 \cdot \textit{price}(\theta_2, t), \quad \forall t.
\]

Suppose a bridge undergoes an attack or failure. By Definitions~\ref{attackdef} and~\ref{failuredef}, this implies that at least one of Equations~\eqref{eq:causalityeq1}, \eqref{eq:crosschain1}, or \eqref{eq:crosschain2} is violated.

\textit{Case 1: Violation of causality (\eqref{eq:causalityeq1}).}  
If a token transfer occurs on $b_2$ without a corresponding event on $b_1$, or vice versa, then either tokens are created without backing ($v_2 > v_1$), or locked value is unclaimed ($v_1 > v_2$). In both cases, the equivalence in~\eqref{eq:token_equal2} is broken.

\textit{Case 2: Violation of consistency (\eqref{eq:crosschain1} or \eqref{eq:crosschain2}).}  
If a user holds tokens on $b_2$ while also being able to access funds on $b_1$, or if the one-to-one correspondence between locked and minted tokens is violated, then value duplication or loss occurs. Again, this leads to $v_1 \cdot \textit{price}(\theta_1, t) \not\equiv v_2 \cdot \textit{price}(\theta_2, t)$.

In all such cases, the bridge no longer preserves token parity, hence Equation~\eqref{eq:token_equal2} is violated.

\end{proof}

\section{Cross-Chain Bridge Solutions}
\label{appendix:bridge-details}
 
We will discuss several popular cross-chain bridge protocols, cross-referencing them with their formalization. Specifically, we examine i) the implementation choices made by each bridge at different layers, ii) the trust assumptions underlying each bridge, iii) fee and time considerations, including comments on security where relevant, iv) the specific blockchains that each bridge connects. Table~\ref{tab:bridge_implementations} provides an overview of cross-chain bridge implementations and architectural components.

\subsection{Arbitrum Bridge}

Arbitrum is not a bridge itself, but a layer two rollup solution designed to prepare large batches of transactions off-chain, significantly reducing gas costs and scaling Ethereum. However, Arbitrum also includes a native token bridge, which lets users move assets between Ethereum (L1) and Arbitrum (L2). This is commonly referred to as the Arbitrum Bridge, and it's used to deposit and withdraw tokens like ETH or ERC-20s. 

As an optimistic rollup, Arbitrum assumes transaction validity by default and relies on economic incentives to detect and correct fraud. A centralized sequencer currently assembles and orders transaction batches, producing a new Arbitrum block for each Ethereum block. These transactions are compressed and posted as calldata (read-only, byte-addressable region) on Ethereum.  

A set of smart contracts deployed on Ethereum manages the core protocol logic, including the rollup mechanism, cross-chain messaging, fraud proofs, bridging, and third-party gateway interactions. The trust set is defined as $\tau = \{\text{SCs}\}$, representing these deployed smart contracts, which collectively enforce the security and correctness of the rollup.

\textit{Arbitrum’s bridge} is structured around a gateway bridge contract that delegates to more specific contracts (e.g., ERC20, WETH, or custom contracts). These specialized contracts are responsible for minting the corresponding tokens on layer 2. This modular design enhances security and extensibility by isolating token-specific logic across separate contracts.

To guard against fraud, Arbitrum implements a challenge-response protocol during a designated challenge period. Off-chain participants can dispute the validity of a rollup block through an interactive verification process often described as the \textquote{grasshopper game}, where the correct and fraudulent segments of execution are iteratively narrowed down. If fraud is detected, the protocol rolls back to the last valid state and removes the invalid block. A successful challenger receives a financial reward, typically derived from the stake of the fraudulent party. Conversely, false challengers lose their stake, discouraging frivolous disputes.

Finality on Arbitrum is delayed by the challenge period, which affects asset bridging. Users must wait until this period ends before safely withdrawing assets to Ethereum. Although this delay is inherent to optimistic security models, it is a practical consideration in bridge comparisons.

\subsection{Wormhole protocol}

\textit{Wormhole} defines itself as a cross-chain message-passing protocol that enables communication between heterogeneous blockchains. Bridging protocols are built on top of \textit{Wormhole} to facilitate token and data transfers across chains.  

On-chain, \textit{Wormhole} uses a set of smart contracts deployed to both the source and destination chains. The emitter contract on the source chain initiates a transfer by invoking the \textit{`publishMessage`} method on the \textit{Wormhole} core contract. These core contracts are responsible for emitting and verifying cross-chain messages, forming the trust set $\tau = \{\text{SCs}\}$ on both ends. The user relies on the correct and secure execution of these smart contracts for the protocol’s integrity.  

The off-chain implementation $\mathcal{T}_\textit{off}$ relies heavily on a guardian network, composed of 19 independent nodes operated by various institutional entities. These guardians observe on-chain activity, verify messages, and collectively sign a Verifiable Action Approval (VAA). A quorum of at least 13 of 19 guardians must sign the VAA for it to be considered valid. This introduces an additional trust assumption, so the full trust set includes off-chain actors, i.e., $\tau \neq \{\}$. In practice, this makes \textit{Wormhole} a trusted bridge, as its security hinges on the honest majority of the guardian set.  

Once the VAA is generated, a relayer (which may be a user or a third party) delivers the signed message to the destination chain. There, the \textit{Wormhole} core contract verifies the guardian signatures and executes the associated instructions via the target smart contract. As with the source chain, the destination chain also has a trust set $\tau = \{\text{SCs}\}$.  

Fee and time considerations in \textit{Wormhole} are variable. Since the relayer role is permissionless, users or third parties may bear the cost of relaying messages across chains, and some applications built on \textit{Wormhole} subsidize or automate these steps. Finality and latency depend on the source chain’s block confirmation time, guardian network processing, and delivery to the destination chain. As such, it does not provide instant finality but is generally faster than optimistic rollups that require challenge periods.  

\textit{Wormhole} connects a broad range of blockchains, including Solana, Algorand, Near \cite{near_protocol}, Aptos \cite{aptos}, Sui \cite{sui}, and CosmWasm chains \cite{cosmwasm}, as well as EVM-compatible blockchains such as Ethereum, Arbitrum, Avalanche, Polygon, Base \cite{base_blockchain}, and Optimism. This makes it one of the most widely integrated cross-chain messaging protocols.

\subsection{Celer \textit{cBridge}}

Celer \textit{cBridge} offers two models for cross-chain asset transfers: a liquidity-pool-based model and an OpenCanonical token bridging model. In the pool-based model, liquidity providers lock assets into pools on both the source and destination chains, allowing users to swap assets between chains directly. In the OpenCanonical model, token transfers follow a mint-and-burn approach. On the source chain, tokens are locked in a smart contract called \textit{TokenVault}, initiating the bridging process. The trust set on the source chain is $\tau = \{\text{SCs}\}$, as users must rely on the correct execution of these smart contracts.

The off-chain implementation $\mathcal{T}_\text{off}$ involves the State Guardian Network (SGN), a Cosmos-based proof-of-stake blockchain built on Tendermint. SGN acts as a validator and orchestrator of the bridge, confirming transactions off-chain. Since SGN is a third blockchain mediating between the source and destination chains, \textit{cBridge} operates as a sidechain-based bridge. The system’s trust assumptions, therefore, include the consensus protocol of SGN, denoted $\Re_{\text{SC}}$, as well as the live correctness of validators ($LC$), message propagation ($MP$), and the honest majority of SGN nodes ($N$). The resulting trust set is $\tau = (\{LC, MP\} \cap \{N\}) \cup \Re_{\text{SC}}$, characterizing \textit{cBridge} as a trust-minimized bridge relative to fully trusted notary-based models.

After validation by SGN, the transaction is relayed to the \textit{PeggedToken} contract on the destination chain, which mints pegged tokens to complete the transfer. The destination chain's trust set is again $\tau = \{\text{SCs}\}$, as users rely on the correct behavior of the deployed smart contracts.

An additional layer of optional security is available through an optimistic-rollup-style model. In this mode, before SGN finalizes a transfer, there is a buffer delay period during which transactions can be independently verified. If inconsistencies are detected, the transfer can be halted, offering an added layer of fraud prevention.

Fee and time considerations for \textit{cBridge} vary by model and configuration. Both \textit{xAsset} (mint-and-burn) and \textit{xLiquidity} (pool-based) bridges impose a base fee and a protocol fee. The base fee includes the destination chain’s gas costs, while the protocol fee, which ranges from 0 percent to 2 percent, depends on the source and destination chains and the transfer amount. This fee compensates SGN validators and stakers for securing the protocol.

\textit{cBridge} currently supports a wide range of chains (41 at the time of this writing), including Ethereum, Arbitrum, Avalanche, BNB Chain, Celo \cite{celo}, Polygon, and Fantom, among others.

\subsection{Avalanche Bridge}

The \textit{Avalanche Bridge} is designed to enable fast, secure, and low-cost asset transfers between Avalanche and external networks such as Ethereum and Bitcoin. On the source chain, a smart contract is invoked to initiate the bridging process, forming a trust set $\tau = \{\text{SCs}\}$, where users rely on the correct execution of the deployed smart contracts.

The off-chain implementation $\mathcal{T}_\text{off}$ of the \textit{Avalanche Bridge} is distinctive and combines a committee-based relayer design with hardware-based verification. A group of eight relayers, known as Wardens, monitor on-chain events and submit signed messages to a secure enclave powered by Intel SGX. This enclave, running verified and attested code, acts as a trusted execution environment that verifies the correctness of submitted data. Once at least six out of eight Wardens agree on the same transaction, the SGX enclave signs off on the transfer. This architecture introduces off-chain trust assumptions, and the trust set satisfies $\tau \ne \{\}$, making \textit{Avalanche Bridge} a trusted bridge. Although the Warden set is geographically and institutionally distributed, the need for a quorum and reliance on Intel SGX introduces centralization risks in the event of collusion or compromise.

On the destination chain, the implementation $\mathcal{T}_\text{dest}$ involves a validator-controlled process. After off-chain validation, a smart contract is called to mint the bridged tokens. While this minting process is smart contract-based, finality is dependent on the decentralized consensus of the underlying blockchain (e.g., Avalanche C-Chain), where validators validate the resulting state. As this process is fully decentralized, the trust set at this stage can be written as $\tau = \{\}$.

Fee and time considerations vary depending on the asset and direction of transfer. Transfers from Ethereum to Avalanche typically take around 15–20 minutes, while Bitcoin transfers may take up to 1 hour due to Bitcoin’s block time. On the Avalanche side, confirmations complete within seconds. ERC-20 tokens sent from Ethereum to Avalanche incur a 0.025 percent fee (minimum USD 3, maximum USD 250), while USDC is charged at a reduced rate of 0.02 percent (minimum USD 3, maximum USD 250). Bridging WBTC or WETH from Ethereum to Avalanche incurs a flat fee of USD 3. Transfers in the opposite direction, from Avalanche to Ethereum, are charged a 0.1 percent fee (minimum USD 12, maximum USD 1000). WBTC and WETH offboarding carries a flat USD 20 fee. Native Bitcoin transfers to Avalanche incur a minimum fee of approximately USD 3 in BTC and result in the minting of BTC.b tokens. Returning BTC.b from Avalanche to the Bitcoin network incurs a fee of approximately USD 20 plus Bitcoin miner fees, with a minimum transfer threshold in place to ensure the transaction is economically viable.

The \textit{Avalanche Bridge} currently supports bi-directional cross-chain transfers between Ethereum or Bitcoin and the Avalanche C-Chain. ERC-20 tokens bridged from Ethereum are wrapped and appear with a \texttt{.e} suffix (e.g., \texttt{USDC.e}, \texttt{WBTC.e}), while Bitcoin is represented as \texttt{BTC.b}. One exception is USDC, which now uses Circle’s Cross-Chain Transfer Protocol (CCTP) and is natively burned and minted on each side, removing the \texttt{.e} suffix. Avalanche-native assets cannot be bridged to other chains using the Avalanche Bridge.

\subsection{Multichain}

\textit{Multichain}, formerly known as Anyswap, presents itself as infrastructure for arbitrary cross-chain interactions. On the source chain, the bridge locks the user's assets in a smart contract that operates under a Secure Multi-Party Computation (SMPC) scheme. This smart contract is referred to by \textit{Multichain} as a Decentralized Management Account, which securely holds assets during the bridging process. The source chain trust set is $\tau = \{\text{SCs}\}$, as users must trust the correct behavior of this contract.

The off-chain implementation $\mathcal{T}_\text{off}$ relies on a network of SMPC nodes, which collectively validate and sign transactions before initiating actions on the destination chain. These nodes serve as notaries in the bridging process, making the trust set $\tau \ne \{\}$, and thus \textit{Multichain} is categorized as a trusted bridge. Because the off-chain signing controls minting and releasing assets, the integrity of this notary network is critical to the bridge’s security.

On the destination chain, once a transaction is validated off-chain by the SMPC network, a smart contract responsible for minting wrapped assets is triggered. The tokens are minted 1:1 relative to the assets locked in the Decentralized Management Account on the source chain. The trust set on the destination chain is again $\tau = \{\text{SCs}\}$, reflecting reliance on the execution of deployed smart contracts. For reverse transfers, the wrapped token contract burns tokens on the destination chain, and the SMPC nodes authorize the release of the original assets from the source chain’s smart contract.

Fee and time considerations depend on the chain and transaction size. For non-Ethereum chains, the cross-chain fee ranges from USD 0.9 to USD 1.9. For Ethereum, a 0.1 percent fee applies, with a minimum of USD 40 and a maximum of USD 1000. Minimum transfer amounts are USD 12 for non-Ethereum destinations and USD 50 when bridging to Ethereum. Maximum transfer size is capped at USD 20 million, though transactions above USD 5 million may incur delays of up to 12 hours.

\textit{Multichain} supports a broad range of blockchains, including Bitcoin, Litecoin \cite{litecoin}, Blocknet \cite{blocknet}, ColossusXT \cite{colossusxt}, Terra \cite{terra_blockchain}, and many EVM-compatible networks such as Ethereum, Avalanche C-Chain, Binance Smart Chain, Celo, Polygon, and Arbitrum. This wide integration makes \textit{Multichain} one of the most expansive cross-chain bridges currently available.

\begin{table}[hbt!]\centering
\caption{Connectivity matrix of major blockchain networks based on existing bridge protocols. Each cell indicates whether a direct bridge exists and, if so, categorizes the bridge by its trust model: \textcolor{green}{TL: Trustless}. Verification relies solely on the blockchains’ own consensus (no third-party custodian), \textcolor{orange!60}{TM: Trust-Minimized}. Decentralized validators or bonded operators secure the bridge (additional assumptions, but no single custodian), and \textcolor{red}{TR: Trusted}. A custodial or centralized entity/committee secures the bridge (users must fully trust this intermediary).}
\label{tab:bridgetrustmatrix}
\renewcommand{\arraystretch}{1.1}
\setlength{\tabcolsep}{4pt}
\newcommand{\TL}{\cellcolor{green!30}TL}
\newcommand{\TM}{\cellcolor{orange!30}TM}
\newcommand{\TR}{\cellcolor{red!30}TR}
\resizebox{\linewidth}{!}{
\begin{tabular}{lccccccccccccccc}
\toprule
Chains & \textbf{BTC} & \textbf{ETH} & \textbf{SOL} & \textbf{AVAX} & \textbf{ATOM} & \textbf{NEAR} & \textbf{DOT} & \textbf{XRP} & \textbf{XLM} & \textbf{ALGO} & \textbf{ADA} & \textbf{ARB} & \textbf{OP} & \textbf{MATIC} & \textbf{BNB} \\ 
\midrule
\textbf{BTC}     & -- & \TM & \TM & \TM & \TM & \TL & \TL & \TM &     & \TR & \TL & \TL & \TM & \TM & \TM \\ 
\textbf{ETH}     & \TM & -- & \TM & \TM & \TM & \TL & \TL & \TM & \TM & \TM & \TM & \TL & \TL & \TR & \TM \\ 
\textbf{SOL}     & \TM & \TM & -- & \TM &      & \TM &      & \TM & \TM & \TM &      & \TM & \TM & \TM & \TM \\ 
\textbf{AVAX}    & \TM & \TM & \TM & -- & \TM & \TM &      & \TM & \TM & \TM & \TM & \TM & \TM & \TM & \TM \\ 
\textbf{ATOM}    & \TM & \TM &      & \TM & -- &      & \TL &      &      &      &      & \TM & \TM & \TM & \TM \\ 
\textbf{NEAR}    & \TL & \TL & \TM & \TM &      & -- &      &      &      &      &      & \TM & \TM & \TM & \TM \\ 
\textbf{DOT}     & \TL & \TL &      &      & \TL &      & -- &      & \TM &      &      & \TM & \TM & \TM & \TM \\ 
\textbf{XRP}     & \TM & \TM & \TM & \TM &      &      &      & -- & \TM &      & \TM & \TM & \TM & \TM & \TM \\ 
\textbf{XLM}     &     & \TM & \TM & \TM &      &      & \TM & \TM & -- &      &      &      &      & \TM & \TM \\ 
\textbf{ALGO}    & \TR & \TM & \TM & \TM &      &      &      &      &      & -- &      & \TM & \TM & \TM & \TM \\ 
\textbf{ADA}     & \TL & \TM &      & \TM &      &      &      & \TM &      &      & -- & \TM & \TM & \TM & \TM \\ 
\textbf{ARB}     & \TL & \TL & \TM & \TM & \TM & \TM & \TM & \TM &      & \TM & \TM & -- & \TM & \TM & \TM \\ 
\textbf{OP}      & \TM & \TL & \TM & \TM & \TM & \TM & \TM & \TM &      & \TM & \TM & \TM & -- & \TM & \TM \\ 
\textbf{MATIC}   & \TM & \TR & \TM & \TM & \TM & \TM & \TM & \TM & \TM & \TM & \TM & \TM & \TM & -- & \TM \\ 
\textbf{BNB}     & \TM & \TM & \TM & \TM & \TM & \TM & \TM & \TM & \TM & \TM & \TM & \TM & \TM & \TM & -- \\ 
\bottomrule
\end{tabular}
}
\end{table}

Table~\ref{tab:bridgetrustmatrix} shows a connectivity matrix for major blockchains and selected Layer-2 networks. Each cell is color-coded to indicate the nature of the bridge connection (if one exists) between the row and column blockchains: Trustless, Trust-Minimized, or Trusted. For example, \textit{Rainbow Bridge} connecting Ethereum and NEAR is trustless, while \textit{Binance Bridge} (BNB Chain) and \textit{Polygon PoS} Bridge are trusted/custodial solutions. \textit{Cosmos IBC} and Polkadot’s upcoming \textit{Snowbridge} are also designed to be trustless. In contrast, multi-signature or externally validated bridges like \textit{Wormhole} and \textit{Multichain} are trust-minimized (they decentralize validation but still rely on additional trust assumptions beyond the base blockchains. Blank cells indicate no widely used direct bridge between those networks. All entries are based on up-to-date bridge protocols (e.g., \textit{Wormhole}, \textit{Multichain}, \textit{LayerZero}, \textit{IBC}, etc.) and public documentation of their trust models.

\section{State-of-Art for Blockchain Bridges}\label{appendix:state} Blockchain interoperability has created an ecosystem of bridges linking most L1 and L2 networks. For example, the Wormhole protocol alone connects 23+ blockchains across six different smart contract runtimes \cite{wormhole2025}, and the Axelar network bridges over 40 chains~\cite{axelar2025}. In parallel, Ethereum serves as a hub in this web; many chains deploy bridges to Ethereum to tap its liquidity and user base. Even previously siloed platforms are integrating. For instance, Cardano \cite{cardano} has explored the Inter-Blockchain Communication (IBC) Protocol~\cite{cosmosIBC2024} to join the Cosmos network.  

\subsubsection{Trust Models}  Classic examples of trustless bridges fall into two categories: L1-to-L1 bridges, such as IBC, and L1-to-L2 bridges, such as in Arbitrum. On IBC channels, light clients on each chain validate each other’s consensus state~\cite{ibc2025}, so IBC’s security reduces to the underlying chains’ security with no additional trusted parties (native chain security). Rollup-based L1-to-L2 bridges fall into optimistic and zero-knowledge-based bridges~\cite{rollup2024}. A zero-knowledge rollup bridge includes a validity proof with every state update, so Ethereum only accepts withdrawals with a valid proof, whereas an optimistic rollup bridge relies on fraud proofs with a challenge period. In optimistic rollup bridges, the bridge inherits the source chain’s security and does not introduce new trust assumptions, aside from at least one honest watcher in optimistic systems \cite{thibault2022blockchain}. Indeed, industry analyses consider Ethereum’s rollup bridges (L1-to-L2) among the most trustless interoperability solutions available~\cite{kiepuszewski2022rollups}. 

Rollups cannot facilitate bridging between L1 chains with different runtimes (e.g., Bitcoin and Ethereum). As a result, most L1-to-L1 connectivity relies on trust-minimized intermediaries (i.e., networks of validators, relays, or oracles) that verify transactions across chains. Users must trust that a majority of these actors behave honestly, often incentivized through mechanisms like staked collateral or slashing penalties. Examples include \textit{Axelar}~\cite{axelar}, \textit{Wormhole}~\cite{wormhole}, \textit{LayerZero}~\cite{layerzero}, \textit{Multichain}~\cite{multichain}, and Celer’s \textit{cBridge}~\cite{celer_cbridge}. These permissioned or federated bridges require a predefined set of entities to authorize transfers and offer stronger assurances than single custodians, but fall short of the security provided by native chain verification.  For instance, bridging assets from Ethereum to \textit{BNB Chain} \cite{bnb_chain} or Solana often involves a validator coalition verifying the source chain deposit and minting a wrapped asset on the destination chain. 

At the far end of the spectrum are trusted bridges, which include custodial bridges (e.g., wrapped Bitcoin via \textit{BitGo}’s WBTC, where a single custodian holds the asset reserve) and some early L1-to-L1 bridges run by a single team or exchange. A notable example was Binance’s original bridge connecting Ethereum and \textit{BSC} (\textit{BNB Smart Chain})~\cite{binancebridge2023}, essentially operated by Binance as a trusted custodian. Similarly, the \textit{Ronin bridge} for Axie Infinity (prior to its 2022 hack) relied on just nine validators controlled by a single organization (Sky Mavis) and its partners, effectively a federated multisig under one entity’s control. Such trusted models have minimal decentralization: users must trust that the custodian or signers will never misbehave, as there are no protocol-level penalties or verifications of their actions. While simple and fast to deploy, these bridges have the weakest security guarantees and have often proven fragile under attack, such as the \textit{Ronin bridge} hack of 2022 (see Table~\ref{tab:bridge_attacks}). Even when the underlying bridge technology is secure, adversaries may compromise physical infrastructure to exfiltrate private keys.


 \begin{table}[h]
\centering
\caption{Documented security incidents and failures in cross-chain bridge and routing architectures. Most attacks involve violations of the cross-chain causality prior.}
\label{tab:bridge_attacks}
\resizebox{\linewidth}{!}{
\begin{tabular}{llcrllll}
\toprule
 Bridge & Architecture Type & Date ($\downarrow$) & Loss (USD) & Technique & Violation &Target &Vector \\
\midrule
 \href{https://rekt.news/thorchain-rekt2} {Thorchain} & Sidechain/Relay & 2021-06-29 & 140K & False top-up&Causality& Source chain &V3 \\
 \href{https://rekt.news/chainswap-rekt} {ChainSwap} & Notary & 2021-07-02 & 800K & Contract vulnerability &Consistency& Source chain &V3 \\
 \href{https://rekt.news/chainswap-rekt} {ChainSwap 2} & Notary & 2021-07-11 & 4M & Contract vulnerability &Consistency& Source chain &V3\\
 \href{https://rekt.news/multichain-rekt2} {Multichain} & MPC (custodial) & 2021-07-10 & 7.9M & Private key compromised (Bad ECDSA) &Causality& Source chain &V13\\
 \href{https://rekt.news/thorchain-rekt2} {Thorchain 2} & Sidechain/Relay & 2021-07-15 & 7.6M & False top-up &Causality& Source chain &V9\\
 \href{https://rekt.news/thorchain-rekt2} {Thorchain 3} & Sidechain/Relay & 2021-07-22 & 8M & Refund logic exploit&Causality& Source chain &V3\\
 \href{https://rekt.news/thorchain-rekt2} {Thorchain 4} & Sidechain/Relay & 2021-07-24 & 76K & Phishing attack &Causality& Off-chain relayer &V23\\
 \href{https://rekt.news/poly-network-rekt2} {Poly Network} & Notary & 2021-08-10 & 611M & Trusted state root exploit&Causality& Source chain &V9\\
 \href{https://rekt.news/poly-network-rekt2} {pNetwork} & Notary & 2021-09-20 & 13M & Inconsistent event parsing &Consistency& Off-chain relayer &V6\\
 Plasma Bridge & Sidechain/Relay & 2021-10-05 & None & Reused burn proof&Consistency& Destination chain&V15\\
 Optics Bridge & Notary & 2021-11-23 & None & Multi-signature permission vulnerability&Causality& Off-chain relayer &V3\\
 \href{https://rekt.news/multichain-rekt2} {Multichain 2} & MPC (custodial) & 2022-01-18 & 1.4M & Token contract vulnerability &Consistency& Source chain &V3\\
 \href{https://rekt.news/qubit-rekt} {Qubit Finance} & Relay & 2022-01-28 & 80M & Deposit function exploit&Causality& Source chain &V3\\
 \href{https://rekt.news/wormhole-rekt} {Wormhole} & Validator-based & 2022-02-02 & 326M & Signature exploit&Causality& Destination chain &V12\\
 \href{https://rekt.news/meter-rekt} {Meter} & Sidechain/Relay & 2022-02-06 & 7.7M & Deposit function exploit&Causality& Source chain &V3\\
 \href{https://rekt.news/ronin-rekt} {Ronin} & Sidechain & 2022-03-23 & 624M & Private key compromised (social engineering) &Causality& Off-chain relayer&V13\\
 Marvin Inu & Custodial & 2022-04-11 & 350K & Private key compromise &Causality& Source chain &V13\\
 QANX Bridge & Notary & 2022-05-18 & 2.2M & Deploy fake bridge contract onto BSC&Consistency& Source chain&V3\\
 \href{https://rekt.news/harmony-rekt} {Horizon Bridge} & Notary & 2022-06-23 & 100M & Private key compromised (unknown method)&Causality& Off-chain relayer&V12\\
 \href{https://rekt.news/nomad-rekt}{Nomad} & Relay & 2022-08-01 & 190M & Trusted state root exploit&Causality& Destination chain &V10\\
 \href{https://cointelegraph.com/news/celer-network-shuts-down-bridge-over-potential-dns-hijacking}{Celer} & Sidechain/Relay & 2022-08-18 & 240K & DNS cache poisoning attack on frontend UI appprox &Causality& Off-chain relayer &V23\\
 \href{https://chainbulletin.com/ethw-replay-exploit-caused-by-omni-contract-vulnerability}{Omni Bridge} & Notary & 2022-09-18 & 290K & Possible ChainID vulnerability&Consistency& Destination chain&V15\\
 \href{https://rekt.news/bnb-bridge-rekt}{Binance Bridge} & Custodial & 2022-10-06 & 570M & Proof Verifier Bug&Consistency& Off-chain relayer &V9 \\
 \href{https://finance.yahoo.com/news/quantum-resistant-blockchain-qanplatform-suffers-121549717.html}{QANX Bridge 2} & Notary & 2022-10-11 & 2M & Weak address key vulnerability&Consistency& Destination chain &V13\\
 \href{https://rekt.news/thorchain-rekt2} {Thorchain 5} & Sidechain/Relay & 2022-10-28 & None & Network interruption &Causality& Off-chain relayer &V17\\
 \href{https://rekt.news/poly-network-rekt2} {pNetwork} & Hash-locking & 2022-11-04 & 10.8M & Misconfiguration of the pNetwork-powered bridge &Consistency& Token contract &V3\\
 \href{https://rekt.news/multichain-rekt2} {Multichain 3} & MPC (custodial) & 2023-02-15 & 130K & Rush attack &Consistency& Source chain &V13\\
 \href{https://rekt.news/dexible-rekt}{Dexible} & Trade router& 2023-02-20 & 2M & Unchecked destination address &Consistency& Source chain &V5\\
 \href{https://blog.solidityscan.com/allbridge-hack-analysis-improper-business-logic-564fbadf38b2}{Allbridge} & Sidechain/Relay & 2023-04-02 & 570K & Logic flaw in withdraw function &Consistency& Destination chain&V2 \\
  \href{https://cointelegraph.com/news/jump-crypto-finds-double-voting-vulnerability-in-celer-s-sgn}{Celer} & Hybrid& 2023-05-24 & None & Double-voting vulnerability & Causality & Off-chain relayer &V18\\

 \href{https://rekt.news/poly-network-rekt2} {Poly Network} & Notary & 2023-07-02 & 10M & Private key compromised &Causality& Off-chain relayer &V13\\
 \href{https://rekt.news/poly-network-rekt2}{Poly Network} & Notary & 2023-07-03 & 4.4M & Compromised multisig &Causality& Off-chain component &V13\\
 \href{https://rekt.news/multichain-rekt2}{Multichain 4} & MPC (custodial) & 2023-07-07 & 126M & Compromised address &Causality& Off-chain component &V13\\
 \href{https://rekt.news/multichain-r3kt}{Multichain 5} & MPC (custodial) & 2023-07-14 & unknown & State seized multisigs &Causality& Off-chain component &V12\\
 \href{https://rekt.news/shibarium-bridge-rekt}{Shibarium} & L2 Sequencer Bridge & 2023-08-17 & None & Frozen withdrawals due to L2 bug &Causality& Off-chain component &V17\\
 \href{https://rekt.news/hecoi-htx-rekt}{HTX} & MPC (custodial) & 2023-11-22 & 12.5M & Hot wallet compromised &Causality& Off-chain component &V13\\
  \href{https://rekt.news/heco-htx-rekt}{HECO Bridge} & MPC (custodial) & 2023-11-22 & 86.6M & Hot wallet compromised &Causality& Off-chain component &V13\\
\href{https://rekt.news/hypr-network-rekt}{HYPR Network} & Sidechain/Relay & 2023-12-13 & 220K & Vulnerability in external code dependency &Consistency& Destination chain&V7 \\
 \href{https://rekt.news/socket-rekt}{Socket's Bungee} & Aggregator (router) & 2024-01-16 & 3.3M & Compromised multisig &Consistency& Source chain &V12\\
\href{https://cointelegraph.com/news/alex-bridge-bnb-drained-after-suspicious-upgrade-certik}{Alex} &Notary&2024-05-14&4.3M&Compromised deployer&Causality&Source chain&V12, V7\\
 \href{https://rekt.news/lifi-jumper-rekt}{Lifi/Jumper} & Aggregator (meta-router) & 2024-07-16 & 9.73M & Buggy function &Consistency& Source chain &V2\\
  \href{https://rekt.news/roninnetwork-rektII}{Ronin 2} & Sidechain & 2024-08-06 & 12M & Whitehat MEV attack (parameter error in update) &Causality& Source chain &V4\\
\href{https://protos.com/feg-token-holders-in-despair-after-third-hack-causes-99-dump/}{Feed Every Gorilla} & Notary (Wormhole-based) & 2024-12-30 & 1.3M & Message spoofing & Causality & Source chain &V9\\
\bottomrule
\end{tabular}
}
\end{table}

\subsubsection{Connectivity Trends} Appendix \autoref{tab:bridgetrustmatrix} shows the current bridge topology with clear patterns. Ecosystems with homogeneous technology (e.g., shared virtual machine) use trustless native bridges among themselves, whereas heterogeneous connections (e.g., Bitcoin-Ethereum) gravitate toward trust-minimized hubs. The Cosmos ecosystem is a prime example of the former: dozens of Cosmos-SDK chains (e.g., Osmosis \cite{Osmosis}, Cosmos Hub, Cronos \cite{cronos}, and Juno \cite{cronos}) all interconnect via IBC channels with no external mediators.

To bridge out to Ethereum and other ecosystems, Cosmos projects deployed separate bridge modules and networks. For example, the \textit{Gravity Bridge chain} \cite{gravity_bridge} and Axelar network act as Cosmos-Ethereum L1-to-L1 bridges by having their validator sets observe events on Ethereum and vice versa~\cite{gravity2023}. These are trust-minimized solutions (essentially multisig validator bridges) added onto Cosmos to import assets like USDC, DAI, and WETH into the IBC realm.  

Ethereum and its orbiting chains illustrate a different connectivity paradigm. Virtually every major L1 (Solana, BNB Chain, Avalanche, Polygon, Fantom \cite{fantom}, etc.) has at least one bridge to Ethereum, given the high value of assets and liquidity on Ethereum.  For example, \textit{Wormhole} operates as a cross-chain message network connecting Ethereum with Solana, BSC, Avalanche, Polygon, and others, using a guardian consensus (a set of 19 nodes) to validate transfers. When a user moves ETH from Ethereum to Solana via the L1-to-L1 \textit{Wormhole}, the ETH is locked in a \textit{Wormhole} contract on Ethereum, and a wrapped ETH is minted on Solana, based on a signed attestation by the guardian network. This design is trust-minimized (no single custodian, but users trust the majority of guardians). 

Other popular Ethereum bridges follow similar patterns: \textit{Multichain} (Anyswap) uses a rotated set of Multi Party Computation signers to custody and mint assets between Ethereum and EVM chains; Celer \textit{cBridge} uses a Proof-of-Stake validator set to oversee transfers. In all these cases, Ethereum-to-L1 bridges tend to introduce a separate middleware layer.

Outside the Ethereum/Cosmos spheres, other ecosystems have pursued their own bridging approaches with mixed strategies. Solana, for instance, developed ties to Ethereum and others mainly through third-party bridge networks. \textit{Wormhole}’s origin was as the Solana-Ethereum L1-to-L1 bridge (also known as \textit{Portal}), and it remains the primary connector between Solana and multiple other chains. 

One constraint Solana faced is that its very high throughput (2.9K - 65K tx/sec) and non-EVM-based execution environment meant fewer independent bridge options. The ecosystem largely relied on \textit{Wormhole}, which introduced a single point of failure for connectivity. Recent projects are working on light-client bridges (e.g., using Solana's clients), and even Cosmos’s IBC was adapted in 2023 to work from Solana via an adapter chain. However, Solana’s cross-chain connectivity is primarily driven by multi-signature bridge protocols rather than native, trustless channels. The BNB Chain ecosystem likewise began with a highly centralized bridging model. BNB Chain comprises a dual-chain system, consisting of an application-focused Beacon Chain and an EVM-compatible Smart Chain, bridged by the \textquote{Token Hub}. The BSC Token Hub initially operated under the tight control of a few validators, as evidenced by the fact that a successful attack in October 6, 2022 was able to forge just two messages and mint 2 million BNB (see \autoref{tab:bridge_attacks}, an exploit that even forced a temporary halt to the chain.  In response, BNB Chain has moved to increase the security of its bridges (e.g., raising multisig thresholds and improving key management) and has also encouraged the use of external bridges. 

Nowadays, BNB Chain is connected to Ethereum and others via both Binance’s official L1-to-L1 bridge (still centrally governed) and alternatives like \textit{Celer}, \textit{deBridge}~\cite{debridge}, and \textit{LayerZero}~\cite{layerzero}, which are more decentralized. Nevertheless, BNB’s connectivity strategy remains cautious: its native bridge prioritizes speed and user experience (at the cost of trust), whereas third-party solutions offer decentralization but introduce dependency on external validator sets. We observe a similar pattern in other ecosystems, such as Polygon (the \textit{Polygon PoS} bridge utilizes a 5-of-8 multisig to checkpoint withdrawals, a somewhat trusted design, alongside newer trust-minimized options) and Tron (which relies on custodial bridges for assets like BTC/ETH via BitGo, or exchange-mediated transfers). In contrast, Polkadot \cite{polkadot} takes a route closer to Cosmos: all parachains in a Polkadot parachain cluster are natively interoperable via the XCM (Cross-Consensus Message) format~\cite{xcm}, which is \textquote{trustless and secure} for in-network messaging. Polkadot’s relay chain ensures the validity of messages between parachains, so within this ecosystem, cross-chain actions (such as token transfers and remote contract calls) don’t require external trust. However, bridging Polkadot to external chains still entails separate bridge projects (often with their own validator sets or light clients). Efforts are underway (e.g., Snowfork and Darwinia for Ethereum-Polkadot L1-to-L1 bridges) to make those as trust-minimized as possible, potentially even integrating light-client verification.

\section{Descriptions of \textdagger Attack Vectors}
\label{appsec:vectors}
In Table~\ref{tab:attack_vectors_categorized}, we used \textdagger~to indicate that these vulnerabilities only apply under certain design conditions. We now elaborate on those conditions.

\noindent\textbf{V1: Fake Burn/Lock Proofs.}
Tokens are locked or burned on the source chain, X, before minting can occur on the destination chain, Y. A cryptographic proof sent from X to Y is required for confirmation before minting can occur. This attack involves forging the proof somehow to trick the Y chain into minting before any proper burn or lock has actually happened.

\noindent\textbf{V2: Malicious Transaction Modification.}
Transaction details (addresses, number of tokens, gas fees, etc.) are sent from the source chain to the destination chain during a transaction. Malicious transaction modification involves somehow intercepting this detailed payload and asserting some fake payload. This can be done off-chain with protocols that include some centralized sequencer, or on-chain when the destination bridge is looking for the detailed payload.

\noindent\textbf{V3: Reentry Attack}
Occurs when some external function is called multiple times before the initial execution has completed. This attack vector was used in the famous DAO attack, where the token minting was executed many times before the balance was checked and updated.

\noindent\textbf{V4: Integer Overflow/Underflow}
When overflow or underflow remain unchecked, attackers can induce attacks by creating unexpected results. A wrap-around can also happen, where the maximum value can wrap around to the minimum value, or vice versa, creating room for exploitation.

\noindent\textbf{V5: Access Control Flaws}
 Access control is very important in maintaining control over a contract, and if unchecked, an attacker can impersonate a contract owner and manipulate the entire contract.

 \noindent\textbf{V6: Timestamp Manipulation}
 A large number of blockchain protocols require some timestamp with loose limits to allow for network delays. These loose limits allow attackers to falsify timestamps that will still be considered valid by the protocol, leading to vulnerable behavior in the protocol.

 \noindent\textbf{V7: Inconsistent Cross-Chain Transfers}
 Bridges can implement different methods for withdrawing and depositing, such as burning when performing a withdrawal but minting on a deposit. When these operations are different, attackers can take advantage of a lack of one-to-one consistency.

 \noindent\textbf{V8: Replay Attacks}
 Attackers can take a message in transit that was validated, then delay and retransmit the original message in order to assume validation but cause effects unintended by the original message.

 \noindent\textbf{V9: Oracle Manipulation}
 Oracles being used to provide data, such as token pricing, can be points of vulnerability if an attacker were to take control of the oracle and provide fake data, leading to incorrect fees or token amounts during minting or burning. Oracles can provide more data than just token prices, leading to more points of possible attack.

 \noindent\textbf{V10: Consensus Failure}
 Commonly referred to as a 51\% attack, if an attacker were to control 51\% of the mining or staking power (depending on the protocol) then that attacker would be able to create a new chain segment that builds faster than the legitimate chain. This control can lead to false transactions that could mint tokens on other chains.

\noindent\textbf{V11: Malicious Custodian Manipulation}
Custodial wallets are commonly used to hold locked tokens or private keys via a centralized source. Compromising or maliciously colluding with this custodial wallet leads to access of locked tokens, private keys, or other sensitive data.

\noindent\textbf{V12: Key Leakage / Private Key Theft}
Locked assets can exist within bridges accessible by private keys and bridges themselves can hold keys along with their controlling proxy contracts. Any compromising of these keys through phishing or other targeted attacks can create vulnerabilities, potentially giving an attacker full control over a bridge.

\noindent\textbf{V13: Race Condition Attacks}
A vulnerability is created when two or more processes access the same data, such as when withdrawals and deposits are processed without proper synchronization, and conflicting transactions can occur.

\noindent\textbf{V14: Unsafe External Call Exploits}
When a smart contract accesses some external call to third-party contracts or off-chain contracts, these new calls create vulnerability in their own code security that can be exploited.

\noindent\textbf{V15: Forged Account Attacks}
Occurs when an account or address is deliberately created to impersonate the identity of another party. This can lead to the redirection of tokens in a bridge structure if a fake account is validated.

\noindent\textbf{V16: Malicious Event Log Manipulation}
Key actions are logged by smart contracts, typically off-chain. Falsification of the event log can lead to relaying of some fake events that have not happened. This results in tricking the chain into a false state.

\noindent\textbf{V17: Denial of Service Attacks}
Bridge contracts can be flooded with transactions and requests sent by an attacker, overwhelming the bridge protocol and slowing processes for all transactions on the bridge. This could potentially lead to the complete halting of a bridge protocol.

\noindent\textbf{V18: Arithmetic Accuracy Deviation}
Exact calculations are required to maintain consistency across chains. Tiny errors in integer overflow or underflow can cause precision errors, leading to inconsistency between chain states.

\begin{table*}[ht]
    \centering
    \caption{Cross-Chain Bridge Implementations and Architecture}
    \label{tab:bridge_implementations}
    \renewcommand{\arraystretch}{1.2}
    \resizebox{\textwidth}{!}{
    \begin{tabular}{l l l l}
        \toprule
        \textbf{Bridge} & \textbf{Source Chain} (Asset Custody \& Locking) & \textbf{Off-Chain Mechanism} (Cross-Chain Validation) & \textbf{Destination Chain} (Asset Release \& Minting) \\
        \midrule
        \textbf{Allbridge Classic} & Smart Contracts – Assets locked in source-chain bridge contract\footnote{\url{https://li.fi/knowledge-hub/allbridge-a-deep-dive/}} 
        & Notaries/Relays – Small validator set observes lock and signs confirmations\footnote{\url{https://li.fi/knowledge-hub/allbridge-a-deep-dive}} 
        & Smart Contracts – Target chain bridge contract verifies signature and mints tokens\footnote{\url{https://docs.allbridge.io/allbridge-overview}} \\
        
        \textbf{deBridge} & Smart Contracts – Gateway contract escrows assets\footnote{\url{https://docs.debridge.finance/the-debridge-messaging-protocol/protocol-overview}} & Notaries/Relays – Decentralized validator committee signs off-chain\footnote{\url{https://docs.debridge.finance/the-debridge-messaging-protocol/protocol-overview}} & Smart Contracts – Destination contract verifies signatures and releases assets\footnote{\url{https://docs.debridge.finance/the-debridge-messaging-protocol/protocol-overview}} \\
        
        \textbf{Multichain} & Smart Contracts (MPC custody) – Deposited assets held by MPC-controlled vault\footnote{\url{https://docs.multichain.org/}} 
        & Notaries/Relays – Multi-party computation (MPC) validators co-sign transactions\footnote{\url{https://blog.bcas.io/implications-of-cross-chain-bridges-under-mica}}
        & Smart Contracts – Destination contract mints wrapped assets upon MPC authorization\footnote{\url{https://li.fi/knowledge-hub/multichain-a-deep-dive}} \\
        \textbf{Wormhole Bridge} & Smart Contracts – Core contract locks assets or emits transfer message\footnote{\url{https://wormhole.com/docs/learn/infrastructure/core-contracts/}} & Notaries/Relays – Guardian network signs verified action approvals (VAA)\footnote{\url{https://wormhole.com/docs/learn/infrastructure/core-contracts/}} & Smart Contracts – Destination chain contract verifies VAA and mints assets\footnote{\url{https://wormhole.com/docs/learn/infrastructure/vaas/}} \\
        
        \textbf{Avalanche Bridge} & Validator Control – Ethereum escrow EOA with multi-sig custodian\footnote{\url{https://l2beat.com/bridges/projects/avalanche}} & Notaries/Relays – Intel SGX enclave validators approve transactions\footnote{\url{https://l2beat.com/bridges/projects/avalanche}} & Smart Contracts – Destination bridge contract mints assets upon SGX approval\footnote{\url{https://blog.li.fi/avalanche-bridge-a-deep-dive}} \\
        
        \textbf{Ronin Bridge} & Smart Contracts – Ethereum bridge contract locks assets\footnote{\url{https://www.forta.org/blog/ronin-hack}} & Sidechain – Ronin validators (5 of 9 required) approve transfers\footnote{\url{https://www.forta.org/blog/ronin-hack}} & Validator Control – Ronin validators execute mint/burn on sidechain\footnote{\url{https://www.forta.org/blog/ronin-hack}} \\
        
        \textbf{Wanchain Bridge} & Validator Control – Storeman group (25 nodes) holds escrow\footnote{\url{https://docs.wanchain.org/crosschain/security-mechanism}} & Notaries/Relays – Storeman nodes verify transactions using MPC\footnote{\url{https://docs.wanchain.org/wanchain-bridge-nodes/fact-sheet}} & Validator Control – Storeman group authorizes mint/release on target chain\footnote{\url{https://docs.wanchain.org/crosschain/security-mechanism}} \\
        
        \textbf{Connext} & Smart Contracts – NXTP protocol escrow locks tokens\footnote{\url{https://li.fi/knowledge-hub/connext-network-a-deep-dive}} & Relays – Liquidity routers observe lock and relay signed messages\footnote{\url{https://li.fi/knowledge-hub/connext-network-a-deep-dive}} & Smart Contracts – Destination contract releases assets using router liquidity\footnote{\url{https://li.fi/knowledge-hub/connext-network-a-deep-dive}} \\
        
        \textbf{Axelar} & Smart Contracts – Gateway contract on source chain locks assets\footnote{\url{https://www.axelar.network/blog/a-technical-introduction-to-the-axelar-network}} & Sidechain – Axelar validators reach consensus and use threshold cryptography\footnote{\url{https://www.axelar.network/blog/a-technical-introduction-to-the-axelar-network}} & Smart Contracts – Destination gateway releases tokens upon validator approval\footnote{\url{https://www.axelar.network/blog/a-technical-introduction-to-the-axelar-network}} \\
         
        \textbf{Binance Bridge} & Validator Control – Binance custodial wallet holds assets\footnote{\url{https://academy.binance.com/en/articles/what-s-a-blockchain-bridge}} & Notaries/Relays – Binance-controlled relay system detects deposits\footnote{\url{https://academy.binance.com/en/articles/what-s-a-blockchain-bridge}} & Validator Control – Binance mints BTokens pegged to locked assets\footnote{\url{https://academy.binance.com/en/articles/what-s-a-blockchain-bridge}} \\
         
        \textbf{Horizon Bridge} & Validator Control – 2-of-5 multi-signature custody on Ethereum\footnote{\url{https://blockchaingroup.io/guides/harmonys-horizon-bridge-exploit-a-crypto-money-laundering-case-study/}} & Notaries/Relays – Federated validators approve transfers\footnote{\url{https://www.elliptic.co/blog/analysis/over-1-billion-stolen-from-bridges-so-far-in-2022}}
        & Validator Control – Multi-sig bridge mints tokens on Harmony\footnote{\url{https://bridge.harmony.one/faq}} \\
        
        \textbf{BTC Relay} & Validator Control – Bitcoin transactions are verified by BTC Relay’s on-chain light client, but funds must be sent to an externally controlled address\footnote{\url{https://btc-relay.readthedocs.io/en/latest/frequently-asked-questions.html}} & Light Client – BTC Relay runs a Merkle proof verification contract on Ethereum that allows Bitcoin transactions to be verified on-chain\footnote{\url{https://btc-relay.readthedocs.io/en/latest/frequently-asked-questions.html\#what-s-the-simplest-explanation-of-btc-relay}} & Smart Contracts – Ethereum contracts verify Bitcoin transactions and trigger asset issuance or validation\footnote{\url{https://btc-relay.readthedocs.io/en/latest/frequently-asked-questions.html\#what-s-the-simplest-explanation-of-btc-relay}} \\

        \textbf{zkBridge} & Smart Contracts – Source chain tokens are locked in a bridge contract before proof generation\footnote{\url{https://rdi.berkeley.edu/zkp/zkBridge/zkBridge.html}} & Light Client – zk-SNARK proofs verify block headers across chains without external validators\footnote{\url{https://rdi.berkeley.edu/zkp/zkBridge/zkBridge.html}} & Smart Contracts – Destination chain contracts validate proofs and mint or release tokens\footnote{\url{https://rdi.berkeley.edu/zkp/zkBridge/zkBridge.html}} \\

        \textbf{PeaceRelay} & Smart Contracts – Ethereum-based contracts escrow tokens and validate state transitions\footnote{\url{https://blog.kyberswap.com/peacerelay-connecting-the-many-ethereum-blockchains/}} & Light Client – Smart contracts maintain Merkle-Patricia proofs for cross-chain validation\footnote{\url{https://blog.kyberswap.com/peacerelay-connecting-the-many-ethereum-blockchains/}} & Smart Contracts – Contracts on Ethereum Classic or other chains verify proofs and mint assets\footnote{\url{https://blog.kyberswap.com/peacerelay-connecting-the-many-ethereum-blockchains/}} \\

        \textbf{Rainbow Bridge} & Smart Contracts – NEAR and Ethereum smart contracts escrow tokens and initiate cross-chain transfers\footnote{\url{https://dev.near.org/bridge}} & Light Client – Each chain hosts a light client verifying the other chain’s consensus, eliminating external trust\footnote{\url{https://dev.near.org/bridge}} & Smart Contracts – Contracts on NEAR or Ethereum validate and execute transactions upon light-client verification\footnote{\url{https://dev.near.org/bridge}} \\

        \textbf{Cosmos IBC} & Smart Contracts – The IBC module in Cosmos SDK escrows assets and generates proofs\footnote{\url{https://tutorials.cosmos.network/tutorials/6-ibc-dev/}} & Light Client – Chains run light clients of each other, verifying block headers using the Tendermint consensus\footnote{\url{https://tutorials.cosmos.network/tutorials/6-ibc-dev/}} & Smart Contracts – The receiving chain validates IBC messages and mints corresponding assets\footnote{\url{https://tutorials.cosmos.network/tutorials/6-ibc-dev/}} \\

        \textbf{Gravity Bridge} & Validator Control – Ethereum assets are locked in a Gravity.sol contract controlled by the Gravity Bridge validators\footnote{\url{https://www.gravitybridge.net/home-v2}} & Sidechain – Gravity Bridge validators observe Ethereum transactions and reach consensus on transfers\footnote{\url{https://blog.cosmos.network/gravity-is-an-essential-force-of-the-cosmos-aligning-all-planets-in-orbits-in-the-composable-b1ca17de18cc}} & Validator Control – Tokens are minted on Cosmos chains upon validator consensus\footnote{\url{https://blog.cosmos.network/gravity-is-an-essential-force-of-the-cosmos-aligning-all-planets-in-orbits-in-the-composable-b1ca17de18cc}} \\

        \textbf{zkRelay} & Validator Control – Transactions from PoW chains like Bitcoin are observed, with no native smart contracts on the source chain\footnote{\url{https://eprint.iacr.org/2020/433.pdf}} & Light Client – zk-SNARKs are used to verify large batches of Bitcoin headers efficiently\footnote{\url{https://eprint.iacr.org/2020/433.pdf}} & Smart Contracts – Ethereum or other chains verify the zk-proof and execute transactions accordingly\footnote{\url{https://eprint.iacr.org/2020/433.pdf}} \\

        \textbf{Cactus} & Validator Control – A permissioned set of validators controls funds and relays messages\footnote{\url{https://www.hyperledger.org/use/cactus}} & Notaries/Relays – Cross-chain transactions are notarized by trusted relayers in enterprise settings\footnote{\url{https://www.hyperledger.org/use/cactus}} & Validator Control – The target blockchain executes transactions based on validator approvals\footnote{\url{https://www.hyperledger.org/use/cactus}} \\

        \textbf{Celer cBridge} & Smart Contracts – Liquidity pools and lock-mint contracts facilitate asset transfers\footnote{\url{https://cbridge-docs.celer.network/}} & Sidechain/Relayer Network – Celer’s State Guardian Network (SGN) validates transfers via PoS consensus\footnote{\url{https://cbridge-docs.celer.network/}} & Smart Contracts – cBridge smart contracts process releases based on SGN validators’ signed messages\footnote{\url{https://cbridge-docs.celer.network/}} \\

        \textbf{Orbit Bridge} & Validator Control – A federation of Orbit Chain validators governs asset custody\footnote{\url{https://bridge-docs.orbitchain.io/}} & Sidechain – Orbit Chain runs a dedicated blockchain that validates cross-chain transactions\footnote{\url{https://bridge-docs.orbitchain.io/}} & Validator Control – Assets are minted or unlocked on the target chain by Orbit validators\footnote{\url{https://bridge-docs.orbitchain.io/}} \\

        \bottomrule
        
        \end{tabular}
    }

\end{table*}

\noindent\textbf{Function Visibility and Modifier Use.}  

\section{Descriptions of Hacks and Attacks on Bridges}
\label{sec:realAttacks}
In one of the earliest and largest bridge hacks, attackers exploited a flaw in \textit{Poly Network}’s cross-chain transaction handling contract, \textquote{literally tricking the project to hack itself}. The \textit{Poly Network bridge} connected multiple chains, including Ethereum, BSC, Polygon, and others. Its core contract (\textit{EthCrossChainManage}) had an insecure design: it could make an external call to an arbitrary target contract with supplied data, without proper authorization checks. The attacker crafted a call payload that misled the manager contract into invoking \textit{Poly}’s own storage contract (\textit{EthCrossChainData}) and updating the keeper of funds as if the attacker were the owner. As a result, approximately \$611 million worth of cryptocurrency was drained from \textit{Poly Network} in a single attack, with the hacker subsequently returning the funds. The attack flow is illustrated in Figure~\ref{fig:poly-attack-overview}.

\begin{figure}[h]
    \centering
\includegraphics[width=0.9\linewidth]{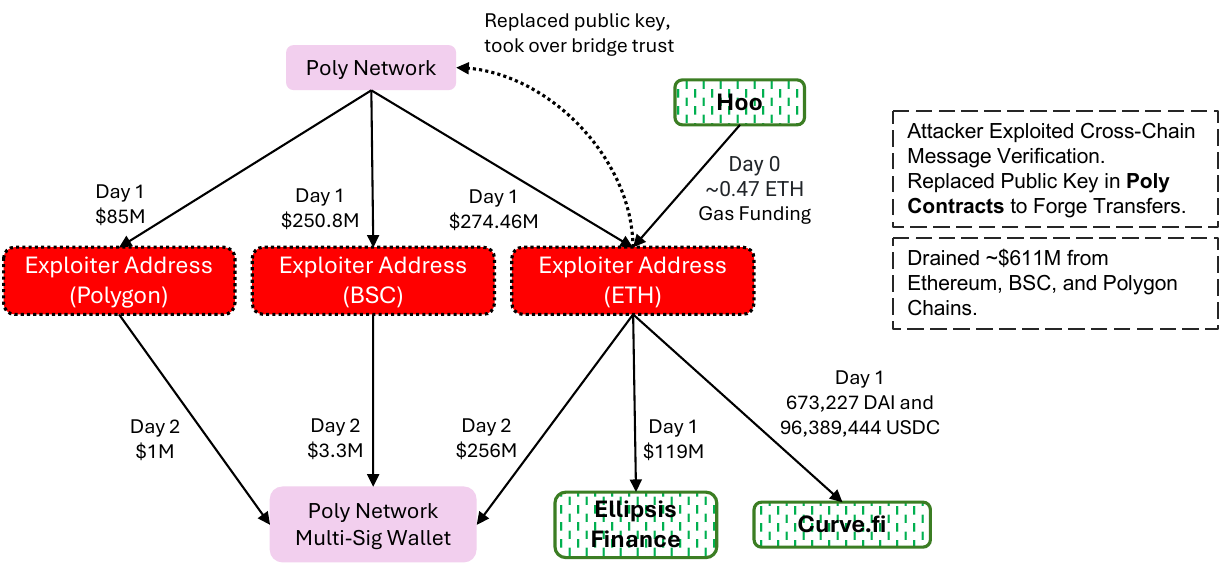}
    \caption{
    Attack flow of the Poly Network exploit.
    }
    \label{fig:poly-attack-overview}
\end{figure}

The \textit{Ronin bridge} to Ethereum was victim to a validator key compromise in an attack now attributed to North Korea’s Lazarus Group. \textit{Ronin} used a 5-of-9 multisig validation for bridge withdrawals. Attackers compromised five private keys through social engineering. Once in control, they issued two fraudulent transactions, draining 173,600 ETH and 25.5M USDC (worth approximately \$624 million at the time) from the \textit{Ronin bridge} in a single stroke. This event, the largest DeFi hack ever, was essentially a failure of the bridge’s trust model: the off-chain validators were assumed honest, but the minimal quorum and centralized key management (a single entity controlled 4 of 9 validators) made it easy for an attacker to breach. The breach went unnoticed for six days until a user discovered it. Figure~\ref{fig:ronin-attack-overview} summarizes the transaction flow of this exploit.

\begin{figure}[h]
    \centering
\includegraphics[width=0.9\linewidth]{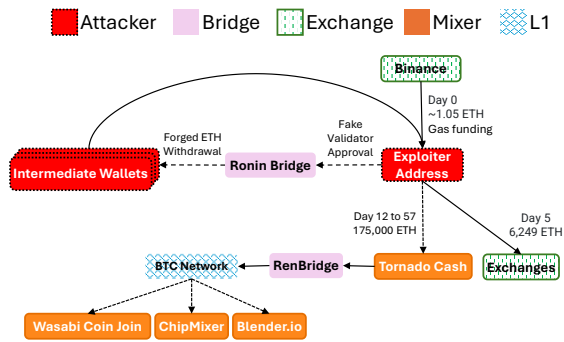}
    \caption{
    Attack flow of the \textit{Ronin} bridge exploit. 
    }
    \label{fig:ronin-attack-overview}
\end{figure}

\textit{Wormhole}’s Ethereum–Solana bridge was hacked for 120,000 WETH (worth \$326 million) in a smart contract bug on the Solana side. \textit{Wormhole} relies on a set of guardian nodes to attest cross-chain messages, generating a signed VAA (Verifiable Action Approval). On Solana, a \textit{Wormhole} program should verify guardian signatures before minting new tokens. However, a critical mistake was introduced in an update: the Solana verification function was using an outdated system call for signature checks, effectively bypassing the actual verification. The attacker discovered that they could simply fabricate a data account that pretended to be a valid signature set, since the contract wasn’t actually validating the signatures against the guardian keys. By exploiting this signature verification bypass, the attacker called \textit{Wormhole}’s complete\_wrapped routine on Solana to mint 120k wrapped ETH for themselves with no real backing. The exploit structure is depicted in Figure~\ref{fig:wormhole-attack-overview}.

\begin{figure}[h]
    \centering
\includegraphics[width=0.9\linewidth]{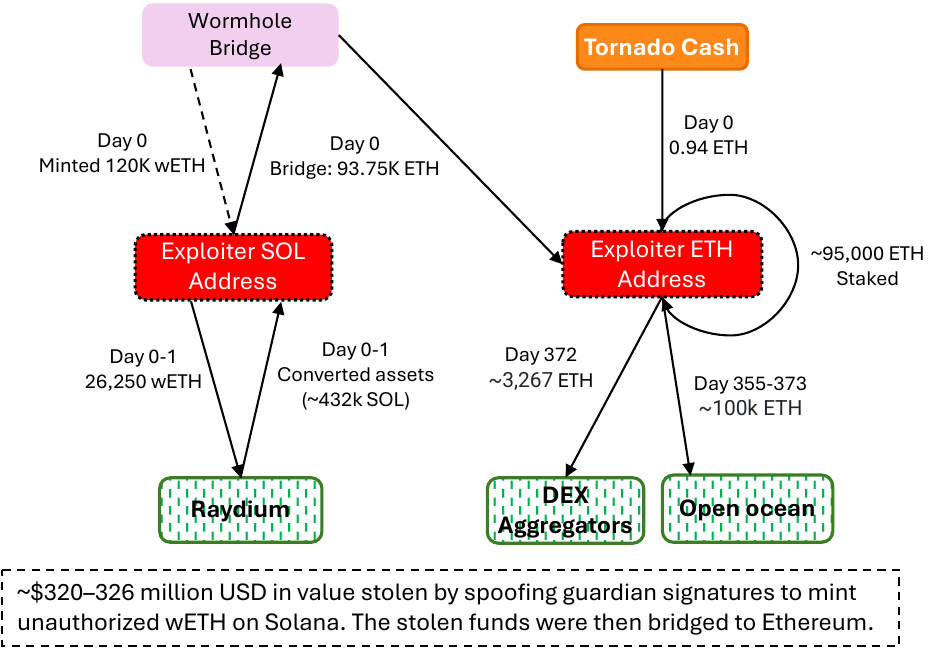}
    \caption{
    Attack flow of the \textit{Wormhole} bridge exploit. 
    }
    \label{fig:wormhole-attack-overview}
\end{figure}

\textit{Qubit Finance}’s bridge between Ethereum and BSC was hit for \$80 million in an attack that abused improper input validation. \textit{Qubit}’s contract allowed users to deposit an asset on Ethereum and mint a pegged version on BSC. The hack targeted the BSC side’s deposit function: due to a logic error in a single overlooked condition, the bridge’s smart contract did not verify that a call to transfer ETH actually happened when a deposit message was processed. In simpler terms, an attacker invoked the bridge deposit with zero ETH but bypassed the intended failure check, tricking the contract into believing collateral was provided. This exploit allowed the attacker to mint limitless xETH (the bridged ETH on BSC) without any real ETH locked on Ethereum. The exploit structure is depicted in Figure~\ref{fig:qubit-attack-overview}.

\begin{figure}[h]
    \centering
\includegraphics[width=0.9\linewidth]{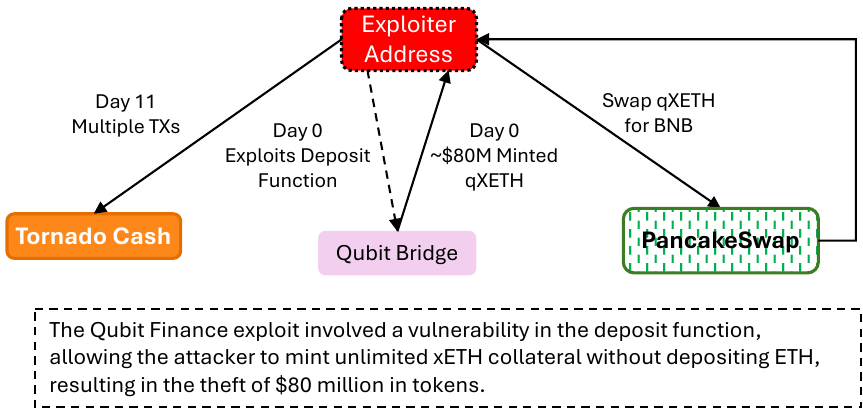}
    \caption{
    Attack flow of the \textit{Qubit} bridge exploit.
    }
    \label{fig:qubit-attack-overview}
\end{figure}

The \textit{Nomad} bridge (connecting Ethereum with Moonbeam and other chains) suffered a \$190 million loss in one of the most chaotic exploits to date. \textit{Nomad} was an optimistic messaging bridge, meaning it had a system of bonded updaters and a fraud challenge window. Unfortunately, after a routine smart contract upgrade, a critical initialization bug was introduced: the \textit{Nomad} Replica contract on Moonbeam was set with a trusted root of 0x00…00 (zero) by mistake, which immediately enabled every message to be accepted as valid.  Attackers quickly realized this and began submitting bogus withdrawal messages to drain tokens. The incident became a free-for-all: once the method was public, dozens of copycats (some white-hat, some black-hat) jumped in to also withdraw funds by replaying the attacker’s transaction call data. The drain pattern across wallets is illustrated in Figure~\ref{fig:nomad-attack-overview}.

\begin{figure}[h]
    \centering
\includegraphics[width=0.9\linewidth]{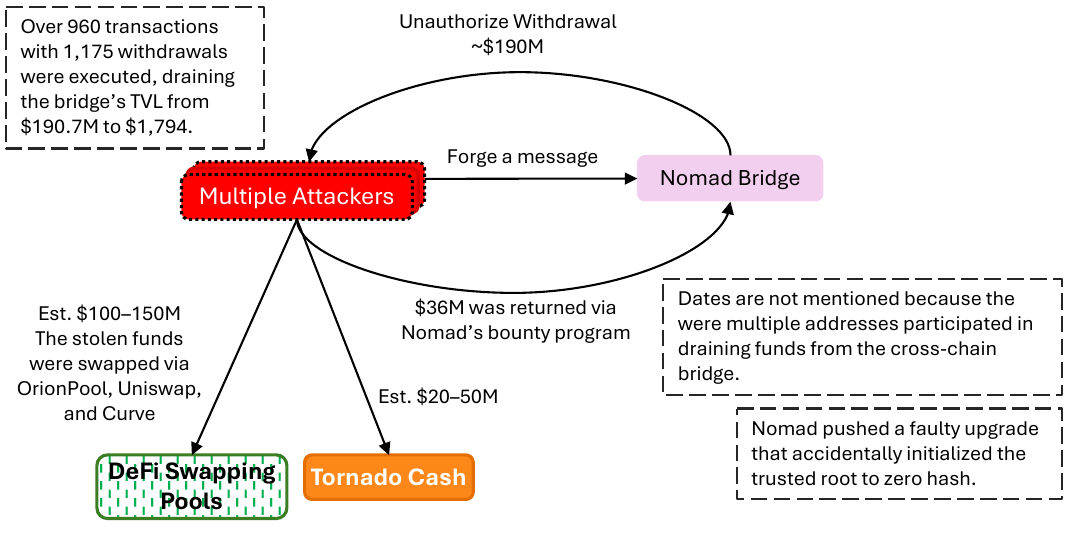}
    \caption{
    Attack flow of the \textit{Nomad} bridge exploit. 
    }
    \label{fig:nomad-attack-overview}
\end{figure}

Harmony’s \textit{Horizon} bridge (connecting Harmony’s chain with Ethereum) was hacked for about \$100 million in assets, through a straightforward attack on its 2-of-5 multisig validators. Similar to \textit{Ronin}, the attackers somehow obtained the private keys for at least two of the five signer addresses that controlled the bridge. Figure~\ref{fig:horizon-attack-overview} shows the flow of compromised validator-controlled funds.

\begin{figure}[h]
    \centering
\includegraphics[width=0.9\linewidth]{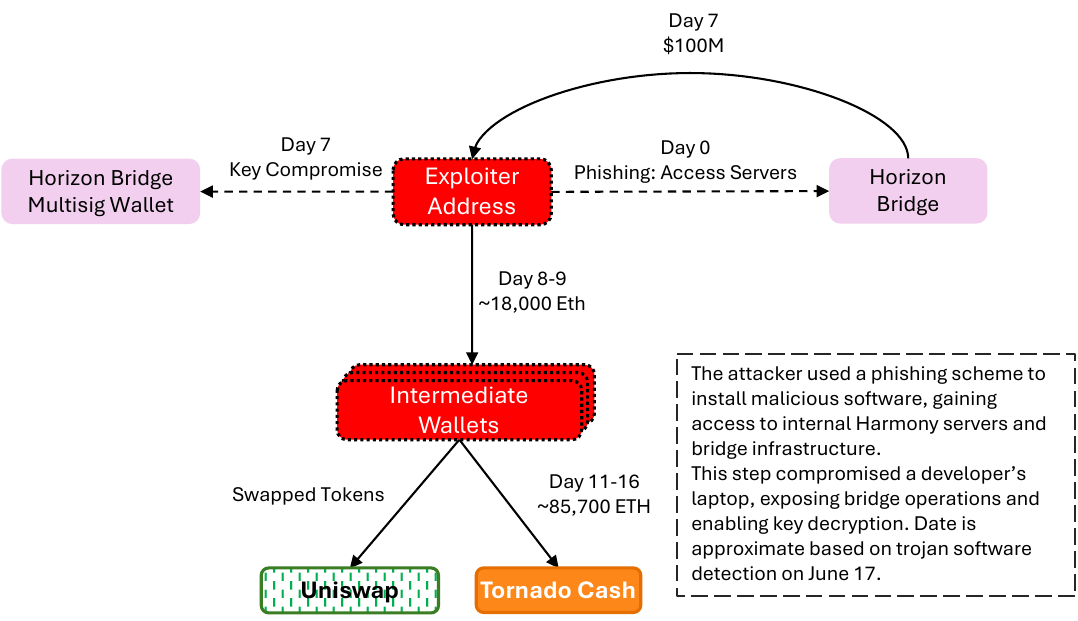}
    \caption{
    Attack flow of the \textit{Horizon} bridge exploit. 
    }
    \label{fig:horizon-attack-overview}
\end{figure}

An attacker exploited the Binance Smart Chain’s main bridge (between BSC and its Beacon Chain) and managed to mint 2 million BNB ($\approx$ \$586 million). The root cause was a bug in the light-client based proof verification on the BSC side. The bridge used an internal light client (IAVL tree-based) to verify bridge messages. The attacker created a fake proof for a past block that the BSC light client would accept as valid, bypassing the normal transaction inclusion checks. By forging this cryptographic proof, the attacker was able to inject a malicious block update that instructed the bridge to mint new BNB to their address. It was a complex exploit at the cryptography/verification layer, essentially an attack on the consensus verification between the two Binance chains. In response, Binance halted the entire BSC chain for 8 hours to prevent the attacker from moving more funds. A large portion of the illicit BNB never left Binance-controlled wallets, limiting the realized theft to around \$100M. This incident revealed that even a seemingly trustless bridge (a light-client-based one) can have implementation flaws in its verification logic. The exploit structure is depicted in Figure~\ref{fig:binance-attack-overview}.

\begin{figure}[h]
    \centering
\includegraphics[width=0.9\linewidth]{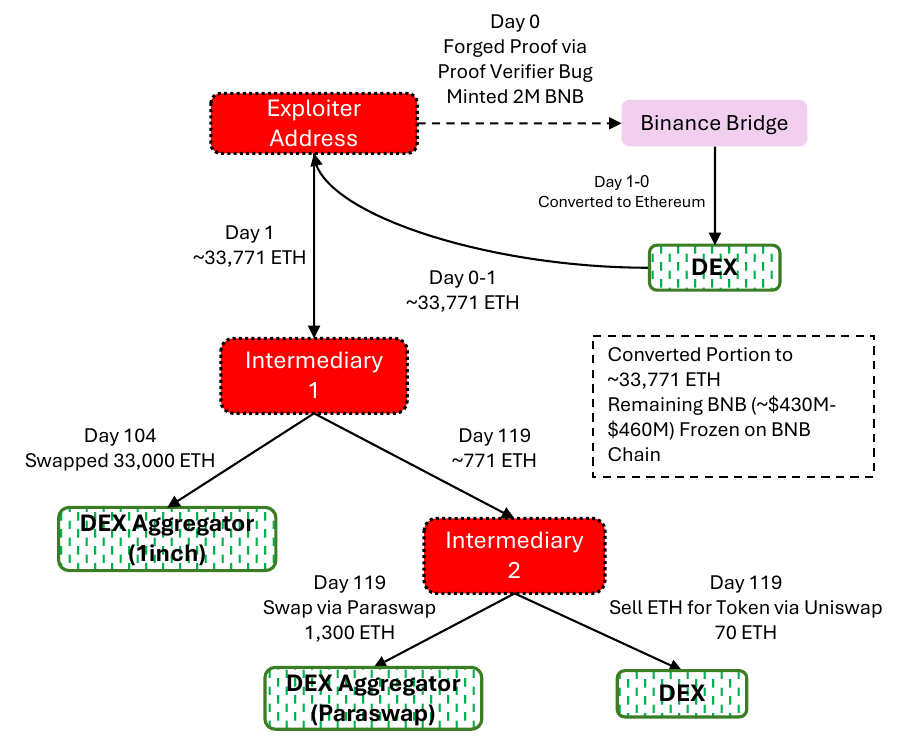}
    \caption{
    Attack flow of the Binance exploit. 
    }
    \label{fig:binance-attack-overview}
\end{figure}

In mid-2023, the \textit{Multichain} bridge (formerly Anyswap), which linked over a dozen EVM chains, experienced a sudden breach resulting in approximately \$126 million in assets withdrawn. Subsequent investigation suggested a compromise of the project’s private keys or server infrastructure. It came to light that all of \textit{Multichain}’s critical MPC key shares were under the control of its CEO, who had been detained by authorities in China. The attackers (or insiders) managed to use these keys to authorize massive transfers from the bridge’s liquidity pools on multiple chains. Essentially, this was an insider compromise. The fallout also disrupted many linked chains’ assets and led to the project’s collapse. The fund outflow pattern across chains is shown in Figure~\ref{fig:multichain-attack-overview}.

\begin{figure}[h]
    \centering
\includegraphics[width=0.9\linewidth]{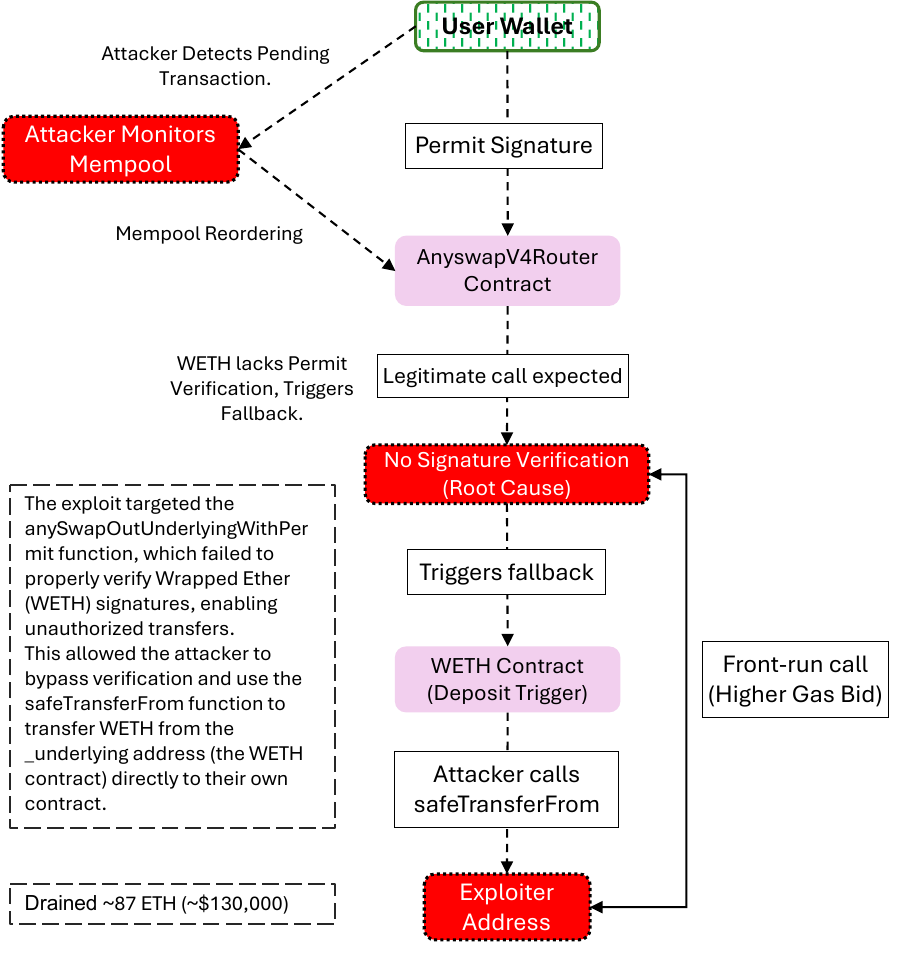}
    \caption{
    Attack flow of the \textit{Multichain} exploit. 
    }
    \label{fig:multichain-attack-overview}
\end{figure}

Following Ethereum’s transition from proof-of-work to proof-of-stake, the OmniBridge was exploited in September 2022 using a replay attack. Initially, the attacker sent 200 WETH using the OmniBridge on the Ethereum PoS chain. Since  transaction format was identical did not have chain ID verification, the attacker replayed the same transaction on the EthereumPoW fork. OmniBridge failed to distinguish between the two chains, therefore it processed the transaction and then released 200 ETHW to the attacker again. This exploit highlights the need for implementing strict chain ID validation during hard forks or network upgrades due to the possibility  of having identical transaction histories across chains.

\begin{figure}[h]
    \centering
\includegraphics[width=0.9\linewidth]{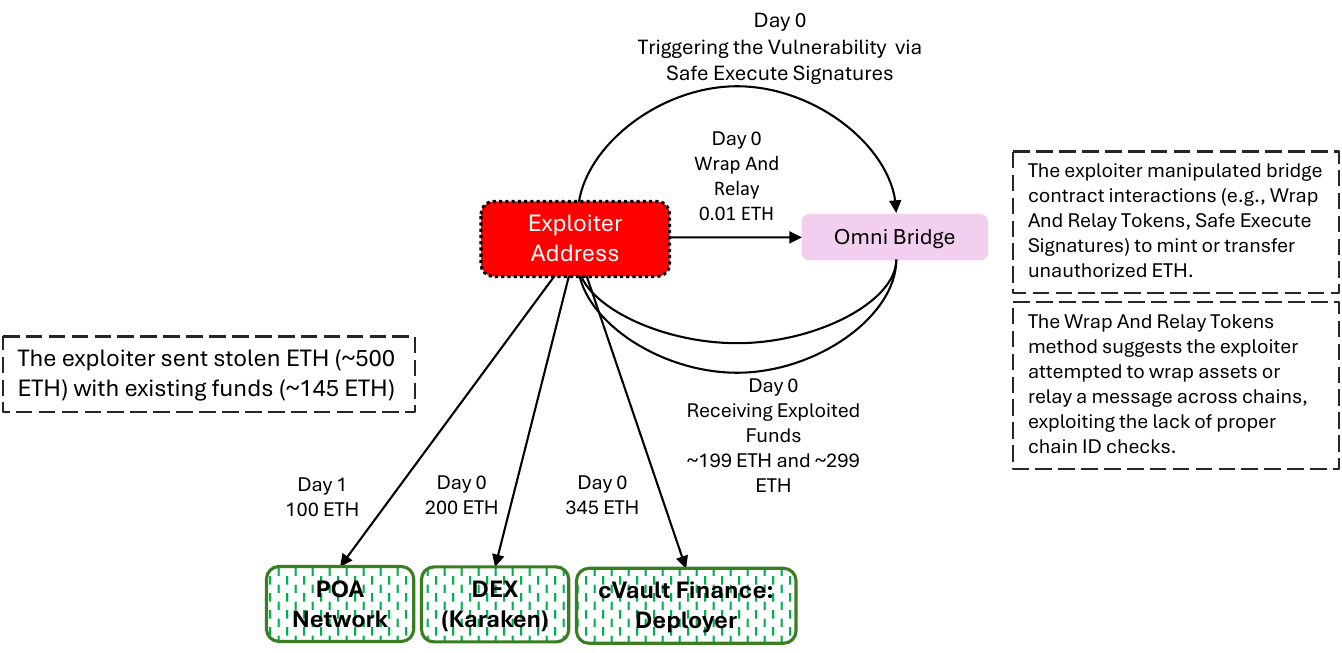}
    \caption{
    Attack flow of the \textit{Omni} bridge exploit. 
    }
    \label{fig:omnibridge-attack-overview}
\end{figure}

Smaller but instructive bridge attacks continued through 2024. For example, \textit{Orbit Chain} (Jan 2024) lost ~\$10M when 7 of its 10 bridge validators were compromised, another case of federated signer failure. An exploit in \textit{ALEX Bridge} (May 2024) (connecting Stacks to other chains) also occurred, reportedly due to a logic bug in its code. These ongoing incidents show that despite industry awareness, bridge security remains challenging.

\begin{table*}[ht]
    \centering
    \caption{Blockhain Bridge Exploits}
    \label{tab:bridge_exploits}
    \renewcommand{\arraystretch}{1.2}
    \resizebox{\textwidth}{!}{ 
    \begin{tabular}{l l l l l}
        \toprule
        \textbf{Exploit Address} & \textbf{Attack Name} & \textbf{Attack Date} & \textbf{Bridge Name} & \textbf{Bridge Address} \\
        \midrule
        \href{https://etherscan.io/address/0xc8a65fadf0e0ddaf421f28feab69bf6e2e589963}{0xc8a...963} & Poly Network Trusted State Root Exploit 1 & 2021-08-10 & Poly Network EthCrossChainManager & \href{https://etherscan.io/address/0x14413419452aaf089762a0c5e95ed2a13bbc488c}{0x144...88c} \\
        \href{https://etherscan.io/address/0x5dc3603c9d42ff184153a8a9094a73d461663214}{0x5dc...214} & Poly Network Trusted State Root Exploit 2 & 2021-08-10 & Poly Network EthCrossChainManager & \href{https://etherscan.io/address/0x14413419452aaf089762a0c5e95ed2a13bbc488c}{0x144...88c} \\
        \href{https://etherscan.io/address/0x1234567890abcdef1234567890abcdef12345678}{0x123...678} & Multichain Rush Attack 1 & 2023-02-15 & Multichain Router V4 & \href{https://etherscan.io/address/0x6b7A87899490EcE95443E979ca9485cbe7E71522}{0x6b7...522} \\
        \href{https://etherscan.io/address/0x9d5765ae1c95c21d4cc3b1d5bba71bad3b012b68}{0x9d5...b68} & Multichain Rush Attack 2 & 2023-02-15 & Multichain Router V4 & \href{https://etherscan.io/address/0x6b7A87899490EcE95443E979ca9485cbe7E71522}{0x6b7...522} \\
        \href{https://etherscan.io/address/0xefeef8e968a0db92781ac7b3b7c821909ef10c88}{0xefe...c88} & Multichain Rush Attack 3 & 2023-02-15 & Multichain Router V4 & \href{https://etherscan.io/address/0x6b7A87899490EcE95443E979ca9485cbe7E71522}{0x6b7...522} \\
        \href{https://etherscan.io/address/0x418ed2554c010a0c63024d1da3a93b4dc26e5bb7}{0x418...bb7} & Multichain Rush Attack 4 & 2023-02-15 & Multichain Router V4 & \href{https://etherscan.io/address/0x6b7A87899490EcE95443E979ca9485cbe7E71522}{0x6b7...522} \\
        \href{https://etherscan.io/address/0x622e5f32e9ed5318d3a05ee2932fd3e118347ba0}{0x622...ba0} & Multichain Rush Attack 5 & 2023-02-15 & Multichain Router V4 & \href{https://etherscan.io/address/0x6b7A87899490EcE95443E979ca9485cbe7E71522}{0x6b7...522} \\
        \href{https://etherscan.io/address/0x48bead89e696ee93b04913cb0006f35adb844537}{0x48b...537} & Multichain Rush Attack 6 & 2023-02-15 & Multichain Router V4 & \href{https://etherscan.io/address/0x6b7A87899490EcE95443E979ca9485cbe7E71522}{0x6b7...522} \\
        \href{https://etherscan.io/address/0x027f1571aca57354223276722dc7b572a5b05cd8}{0x027...cd8} & Multichain Rush Attack 7 & 2023-02-15 & Multichain Router V4 & \href{https://etherscan.io/address/0x6b7A87899490EcE95443E979ca9485cbe7E71522}{0x6b7...522} \\
        \href{https://etherscan.io/address/0x489a8756c18c0b8b24ec2a2b9ff3d4d447f79bec}{0x489...bec} & Binance Bridge Proof Verifier Bug & 2022-10-06 & Binance Bridge Ethereum Contract & \href{https://etherscan.io/address/0x69F4201EE81d155971AcC695AE5963eE8798366D}{0x69F...66D} \\
        \href{https://etherscan.io/address/0x82faed2da812d2e5cced3c12b3baeb1a522dc677}{0x82f...677} & Omni Bridge ChainID Vulnerability Exploit & 2022-09-18 & OmniBridge Multi-Token Mediator & \href{https://etherscan.io/address/0x88ad09518695c6c3712AC10a214bE5109a655671}{0x88a...671} \\
        \href{https://etherscan.io/address/0xb5c55f76f90cc528b2609109ca14d8d84593590e}{0xb5c...90e} & Nomad Trusted State Root Exploit 1 & 2022-08-01 & Nomad BridgeRouter & \href{https://etherscan.io/address/0x88a69b4e698a4b090df6cf5bd7b2d47325ad30a3}{0x88a...0a3} \\
        \href{https://etherscan.io/address/0x56D8B635A7C88Fd1104D23d632AF40c1C3Aac4e3}{0x56D...4e3} & Nomad Trusted State Root Exploit 2 & 2022-08-01 & Nomad BridgeRouter & \href{https://etherscan.io/address/0x88a69b4e698a4b090df6cf5bd7b2d47325ad30a3}{0x88a...0a3} \\
        \href{https://etherscan.io/address/0xBF293D5138a2a1BA407B43672643434C43827179}{0xBF2...179} & Nomad Trusted State Root Exploit 3 & 2022-08-01 & Nomad BridgeRouter & \href{https://etherscan.io/address/0x88a69b4e698a4b090df6cf5bd7b2d47325ad30a3}{0x88a...0a3} \\
        \href{https://etherscan.io/address/0x0d043128146654c7683fbf30ac98d7b2285ded00}{0x0d0...d00} & Horizon Bridge Private Key Compromised & 2022-06-24 & Horizon Bridge & \href{https://etherscan.io/address/0x2dccdb493827e15a5dc8f8b72147e6c4a5620857}{0x2dc...857} \\
        \href{https://etherscan.io/address/0x098b716b8aaf21512996dc57eb0615e2383e2f96}{0x098...f96} & Ronin Private Key Compromised (Social Engineering) & 2022-03-23 & Ronin Bridge Vault (V1) & \href{https://etherscan.io/address/0x1A2a1C938CE3eC39b6D47113c7955BAa9DD454F2}{0x1A2...4F2} \\
        \href{https://etherscan.io/address/0x629e7da20197a5429d30da36e77d06cdf796b71a}{0x629...71a} & Wormhole Account Spoofing & 2022-02-02 & Wormhole Portal Token Bridge & \href{https://etherscan.io/address/0x3ee18B2214AFF97000D974cf647E7C347E8fa585}{0x3ee...585} \\
        \href{https://etherscan.io/address/0xd01ae1a708614948b2b5e0b7ab5be6afa01325c7}{0xd01...5c7} & Qubit Finance Deposit Function Exploit & 2022-01-28 & Qubit QBridge (Ethereum) & \href{https://etherscan.io/address/0xD88E328C305F541E2dE6d3C85Ed081653Cd8A726}{0xD88...726} \\
        \href{https://etherscan.io/address/0x8c1944FAC705ef172f21f905b5523Ae260F76d62}{0x8c1...d62} & Thorchain Private Key Compromised (Phishing Attack) & 2021-07-24 & THORChain Bifrost ETH Router & \href{https://etherscan.io/address/0xc145990e84155416144c532e31f89b840ca8c2ce}{0xc14...2ce} \\
        \bottomrule
    \end{tabular}
    }
\end{table*}


\section*{Static Code Table Column Descriptions}
\label{sec:static-code-columns}

Tables~\ref{tab:accessControl}, \ref{tab:codeAndFunctionMetrics}, and \ref{tab:callsAndErrors} summarize features of bridge smart contracts extracted via static code analysis. Below, we define each column and its relevance.

\begin{itemize}
  \item \textbf{Local vars}: Number of local (function-scoped) variables, indicating complexity of internal logic.
  \item \textbf{Inheritances}: Number of Solidity contracts inherited, reflecting code reuse or modularity.
  \item \textbf{Modifier Count}: Number of \texttt{modifier} constructs used to restrict access or enforce invariants.
  \item \textbf{RoleBased}: Indicates whether role-based access control mechanisms (e.g., \texttt{AccessControl}) are implemented.
  \item \textbf{Standard Libs}: Count of imported standard libraries such as \texttt{SafeMath} or \texttt{Ownable}.
  \item \textbf{LOC (Lines of Code)}: Number of non-comment, non-whitespace lines in the source code.
  \item \textbf{Total Lines}: Full line count including comments and whitespace.
  \item \textbf{Public / External / Internal / Private Funcs}: Number of functions by visibility. Public and external functions are externally callable and represent direct attack surfaces.
  \item \textbf{Global vars Declared}: Number of state variables declared at the contract level.
  \item \textbf{Ext Funcs}: Number of externally callable functions.
  \item \textbf{Low-level}: Number of low-level calls (\texttt{call}, \texttt{delegatecall}, \texttt{staticcall}).
  \item \textbf{Untrusted}: Instances where the target of a low-level call is not a hardcoded or trusted address.
  \item \textbf{Reentry Guard}: Presence of reentrancy protection (e.g., via the \texttt{nonReentrant} modifier).
  \item \textbf{Require / Assert}: Total number of \texttt{require} and \texttt{assert} statements used for validation and safety checks.
  \item \textbf{Custom Errors}: Count of Solidity custom error definitions used for gas-efficient error handling.
  \item \textbf{Checks/Fn}: Average number of \texttt{require} or \texttt{assert} checks per function, used as a proxy for defensive programming density.
\end{itemize}

\end{document}